\begin{document}
	
	\markboth{S. Arifuzzaman et al.}{Distributed-Memory Parallel Algorithms for Counting and Listing Triangles in Big Graphs}
	
	\title{Distributed-Memory Parallel Algorithms for Counting and Listing Triangles in Big Graphs}
	\author{SHAIKH ARIFUZZAMAN
		\affil{University of New Orleans}
		MALEQ KHAN
		\affil{Texas A\&M University--Kingsville}
		MADHAV MARATHE
		\affil{Virginia Tech}
		}

	\begin{abstract}
		
		Big graphs (networks) arising in numerous application areas pose significant challenges for graph analysts as these graphs grow to billions of nodes and edges and are prohibitively large to fit in the main memory. Finding the number of triangles in a graph is an important problem in the mining and analysis of graphs. In this paper, we present two efficient MPI-based distributed memory parallel algorithms for counting triangles in big graphs. The first algorithm employs overlapping partitioning and efficient load balancing schemes to provide a very fast parallel algorithm. The algorithm scales well to networks with billions of nodes and can compute the exact number of triangles in a network with 10 billion edges in 16 minutes. The second algorithm divides the network into non-overlapping partitions leading to a space-efficient algorithm. Our results on both artificial and real-world networks demonstrate a significant space saving with this algorithm. We also present a novel approach that reduces communication cost drastically leading the algorithm to both a space- and runtime-efficient algorithm. Further, we demonstrate how our algorithms can be used to list all triangles in a graph and compute clustering coefficients of nodes. Our algorithm can also be adapted to a parallel approximation algorithm using an edge sparsification method. 
	\end{abstract}

\category{G.2.2}{Discrete Mathematics}{Graph Theory}[Graph Algorithms]
\category{D.1.3}{Programming Techniques}{Concurrent Programming}[Parallel Programming]
\category{H.2.8}{Database Management}{Database Applications}[Data Mining]

\terms{Algorithm, Experimentation, Performance}


\keywords{triangle-counting, clustering-coefficient, massive networks, parallel algorithms, social networks, graph mining.}

\begin{bottomstuff}
	
	This work has been partially supported by DTRA CNIMS Contract HDTRA1-11-D-0016-0001, DTRA
Grant HDTRA1-11-1-0016, DTRA NSF NetSE Grant CNS-1011769 and NSF SDCI Grant OCI-1032677.

	Some preliminary results of the work presented in this paper have appeared in the proceedings of CIKM 2013~\cite{Patric-triangle} and HPCC 2015~\cite{Space-triangle}.
	
	Author's addresses: Shaikh Arifuzzaman is with the University of New Orleans, New Orleans, LA 70148. Maleq Khan is with Texas A\&M University--Kingsville, TX 78363. Madhav Marathe is with Biocomplexity Institute and the Department of Computer Science, Virginia Tech, Blacksburg, VA, 24060.  E-mail: smarifuz@uno.edu, maleq.khan@tamuk.edu, mmarathe@vt.edu. Most of this work was done when all authors were with the Biocomplexity Institute of Virginia Tech.
\end{bottomstuff}

\maketitle

\newcommand{\todo}[1]{{\textcolor{red}{\textbf{To do:} #1}}}
\newcommand{\red}[1]{{\textcolor{red}{ #1}}}
\newcommand{\fsc}{}  
\newcommand{\afsc}{}   
\newcommand{\isc}{}

\section{Introduction}\label{sec:introduction}

Counting triangles in a graph is a fundamental and important algorithmic problem in graph analysis, and its solution can be used in solving many other problems such as the computation of clustering coefficient, transitivity, and triangular connectivity \cite{alon-network-motifs,chu11triangle}. Existence of triangles and the resulting high clustering coefficient in a social network reflect some common theories of social science, e.g., \textit{homophily} where people become friends with those similar to themselves and \textit{triadic closure} where people who have common friends tend to be friends themselves \cite{socialMiller}. Further, triangle counting has important
applications in graph mining such as detecting spamming activity and
assessing content quality in social networks \cite{BECC}, uncovering the thematic structure of the web \cite{eckmann2002theme}, query
planning optimization in databases \cite{YOSSEFF}, and detecting communities or clusters in social and information networks \cite{PratPerez2016PTT}.

Graph is a powerful abstraction for representing underlying relations in large unstructured datasets. Examples include  the web graph \cite{Broder-web}, various social networks \cite{twitter}, biological networks \cite{bioNet}, and many other information networks. In the era of big data, the emerging graph data is also very large. Social networks such as Facebook and Twitter have millions to billions of users \cite{chu11triangle,fb-stat}. Such big graphs motivate the need for efficient parallel algorithms. Furthermore, these massive graphs pose another challenge of a large memory
requirement. These graphs may not fit in the main memory of a single processing unit, and only a small part of the graph is available to a processor.

Counting triangles and related problems such as computing clustering coefficients have a rich history \cite{ALON,SCHANK,LATAPY,tsourakakis2009doulion,SURI,gpu_Green2014,park-cikm2,triangle_shun2015}. Despite the fairly large volume of work addressing this problem, only recently has attention been given to the problems associated with big graphs. Several techniques can be employed to deal with such graphs: streaming algorithms \cite{srikanta-cikm,BECC}, sparsification based algorithms \cite{tsourakakis2009doulion,Wu2016CT}, external-memory algorithms \cite{chu11triangle}, and parallel algorithms \cite{SURI,pinar-mapreduce,srikanta-cikm}. The streaming and sparsification based algorithms are approximation algorithms. Note that approximation algorithms provide an overall (global) estimate of the number of triangles in the graph, which might not be used to count triangles incident on individual nodes (local triangles) with reasonable accuracy. Thus certain local patterns such as local clustering coefficient distribution can not be computed with approximation algorithms. Exact algorithms are necessary to discover such local patterns. 

External memory algorithms can provide exact solution, however they can be very I/O intensive leading to a large runtime. Efficient parallel algorithms can solve the problem of a large runtime by distributing computing tasks to multiple processors. Over the last couple of years, several parallel algorithms, both shared memory and distributed memory (MapReduce or MPI) based, have been proposed.

A shared memory parallel algorithm is proposed in \cite{srikanta-cikm} for counting triangles in a streaming setting. The algorithm provides approximate counts. The paper reports scalability using only $12$ cores. Two other shared memory algorithms have been presented recently in \cite{triangle_shun2015,triangle_hasan2013}: the reported speedups with the first algorithm vary between $17$ and $50$ with $64$ cores. The second paper reports speedups using only $32$ cores, and the obtained speedups are due to both approximation and parallelization. Although these algorithms are useful, shared memory systems with a large number of processors and at the same time sufficiently large memory per processor are not widely available. Further, the overhead for locking and synchronization mechanism required for concurrent read and write access to shared data might restrict their scalability. A GPU-based parallel algorithm is proposed recently in \cite{gpu_Green2014} which achieves a speedup of only $32$ with $2880$ streaming processors.

There exist several algorithms based on the MapReduce framework. Suri et al. presented two algorithms for counting the exact number of triangles \cite{SURI}. The first algorithm generates huge volumes of intermediate data and requires a significantly large amount of time and memory. The second algorithm suffers from redundant counting of triangles. Two papers by Park et al. \cite{park-cikm,park-cikm2} achieved some improvement over the second algorithm of Suri et al., although the redundancy is not entirely eliminated. Another MapReduce based parallelization of a wedge-based sampling technique is proposed in \cite{pinar-mapreduce}, which is also an approximation algorithm.

MapReduce framework provides several advantages such as fault tolerance, abstraction of parallel
computing mechanisms, and ease of developing a quick prototype or program.  However, the overhead for doing so results in a larger runtime. On the other hand, MPI-based systems provide the advantages of defining and controlling parallelism from a granular level, implementing application specific optimizations such as load
balancing, memory and message optimization.

In this paper, we present fast algorithms for counting the \textit{exact} number of triangles. Our algorithms store a small portion of input graph in the main memory of each processor and can work on big graphs. 
Below are the summaries of our contributions.

\textbf{i. A fast parallel algorithm:} We propose an MPI based parallel algorithm that employs an overlapping partitioning scheme and a novel load balancing scheme. The overlapping partitions eliminate the need for message exchanges leading to a fast algorithm. The algorithm scales almost linearly with the number of processors, and is able to process a graph with 1 billion nodes and 10 billion edges in 16 minutes. To the best of our knowledge, this is the first MPI based parallel algorithm in literature for counting triangles in massive graphs.

\textbf{ii. A space efficient parallel algorithm:} We present a space-efficient MPI based parallel algorithm which divides the graph into non-overlapping partitions and achieves a significant space efficiency over the first algorithm. This algorithm requires inter-processor communications to count a certain type of triangles. However, we present a novel  approach that reduces communication cost drastically without requiring additional space, which leads to both a space- and runtime-efficient algorithm. Our adaptation of a parallel partitioning scheme by computing a novel cost function offers additional runtime efficiency to the algorithm.

\textbf{iii. Sequential algorithm and node ordering:} We show, both theoretically and experimentally, a simple modification of a state-of-the-art sequential algorithm for counting triangles improves its performance and use this modified algorithm in the development of our parallel algorithm. We also present a proof of the optimal node ordering that minimizes the computational cost of this sequential algorithm.

\textbf{iv. Parallel computation of clustering coefficients:} In a sequential setting, an algorithm for counting triangles can be directly used for computing clustering coefficients of the nodes by simply keeping the counts of triangles for each node individually. However, in a distributed-memory parallel system, combining the counts from all processors for all nodes poses another level of difficulty. We show how our algorithm for triangle counting can be used to compute clustering coefficients in parallel.

\textbf{v. Parallel approximation using sparsification technique:} Although we present algorithms for counting the exact number of triangles in massive graphs, our algorithm can be used for approximate counting in conjunction with an edge sparsification technique \cite{tsourakakis2009doulion}. We show how this technique can be adapted to our parallel algorithms and that our parallel sparsification improves the accuracy of the approximation over the sequential sparsification \cite{tsourakakis2009doulion}.

\textbf{Organization.} The rest of the paper is organized as follows. The preliminary concepts, notations and datasets are briefly described in Section \ref{sec:prelim}. We discuss sequential
algorithms for counting triangles and present a proof for the optimal node ordering in Section \ref{sec:seqalg} and \ref{sec:optimal_ordering}, respectively. Our parallel algorithms for counting triangles are presented in Section \ref{sec:patric} and \ref{sec:space}. The parallelization of the sparsification technique is given in Section \ref{sec:sparse}. We show in Section \ref{sec:trianglelisting} how we can list all triangles in graphs in parallel. Section \ref{sec:clustering} presents a parallel algorithm for computing clustering coefficients of nodes. We discuss some applications of counting triangles in Section \ref{sec:applications} and conclude in Section \ref{sec:conclusion}.

\section{Preliminaries}
\label{sec:prelim}

The given graph is denoted by $G(V,E)$, where $V$ and $E$ are the
sets of nodes (vertices) and edges, respectively, with $m = |E|$ edges
and $n = |V|$ nodes labeled as $0, 1, 2, \dots, n-1$. We assume the graph $G(V,E)$ is undirected. If $(u,v)\in E$, we say $u$ and $v$ are neighbors of each other. The set of all neighbors of $v \in V$ is denoted by $\mathcal{N}_v$, i.e., $\mathcal{N}_v=\{u \in V | (u,v) \in E\}$. The degree of $v$ is $d_v = |\mathcal{N}_v|$.

A triangle in $G$ is a set of three nodes $u,v,w \in V$ such that there is an
edge between each pair of these three nodes, i.e., $(u,v), (v,w),
(w,u) \in E$. The number of triangles containing node $v$ (in other
words, triangles incident on $v$) is denoted by $T_v$. Notice that the
number of triangles containing node $v$ is the same as the number of
edges among the neighbors of $v$, i.e., $T_v = |\left\{(u,w) \in E : u, w \in \mathcal{N}_v \right\}|$.

The clustering coefficient (CC) of a node $v \in V$, denoted by $C_v$ is the ratio of the number of edges between neighbors of $v$ to the number of all possible edges between neighbors of $v$. Then, we have
\begin{eqnarray*}
C_v = \frac{T_v}{{d_v \choose 2}} = \frac{2T_v}{d_v(d_v - 1)}.
\end{eqnarray*}

Let $p$ be the number of processors used in the computation, which we denote by $P_0,P_1,\dots, P_{p-1}$ where each subscript refers to the rank of a processor.




\textbf{Datasets.} 
We use both real world and artificially generated networks for our experiments. A summary of all the networks is provided in Table \ref{table:dataset}. Miami \cite{miamiRef} is a synthetic, but realistic, social contact network for Miami city. Twitter, LiveJournal, Email-Enron, web-BerkStan, and web-Google are real-world networks. Artificial network  PA($n,d$) is generated using the preferential attachment (PA) model \cite{barabasi99} with $n$ nodes and average degree $d$. Network Gnp($n,d$) is generated using the Erd\H{o}s-R\'{e}yni random graph model \cite{bollobas01Erdos}, also known as $G(n,q)$ model, with $n$ nodes and edge probability $q = \frac{d}{n-1}$ so that the expected degree of each node is $d$. Both real-world and PA($n,d$) networks have very skewed degree distributions. Networks having such distributions create difficulty in partitioning and balancing loads and thus give us a chance to measure the performance of our algorithms in some of the worst case scenarios.

\begin{table}
	\tbl{Dataset used in our experiments. K, M and B denote thousands, millions and billions, respectively.\label{table:dataset}}{
	\begin{tabular}{ | l | l | l | l |}
		\hline
		{\bf Network} & {\bf Nodes} & {\bf Edges} & {\bf Source} \\ \hline
		Email-Enron   &  $37$K    &   $0.36$M     & SNAP \cite{SNAP}\\ \hline
		web-Google   &  $0.88$M    &   $5.1$M     & SNAP \cite{SNAP}\\ \hline
		web-BerkStan  &  $0.69$M    &   $6.5$M   & SNAP \cite{SNAP}\\ \hline
		Miami         &  $2.1$M    &   $50$M     & \cite{miamiRef}\\ \hline
		LiveJournal   &  $4.8$M    &  $43$M     & SNAP \cite{SNAP}\\ \hline
		Twitter       &  $42$M   &     $2.4$B  & \cite{twitterData}\\ \hline
		Gnp($n,d$)    &  $n$      &   $\frac{1}{2}nd$     & Erd\H{o}s-R\'{e}yni \cite{bollobas01Erdos}\\ \hline
		PA($n,d$)     &  $n$     &     $\frac{1}{2}nd$    & Pref. Attachment \cite{barabasi99}\\ \hline
	\end{tabular}}
\end{table}

\textbf{Computation Model.} We develop parallel algorithms for message
passing interface (MPI) based distributed-memory parallel systems,
where each processor has its own local memory. The processors do not
have any shared memory, one processor cannot directly access the local
memory of another processor, and the processors communicate via
exchanging messages using MPI.

\textbf{Experimental Setup.} We perform our experiments using a high performance computing cluster with 64 computing nodes (QDR InfiniBand interconnect), 16 processors (Sandy Bridge E5-2670, 2.6GHz) per node, memory 4GB/processor, and operating system CentOS Linux 6.

\section{Sequential Algorithms}
\label{sec:seqalg}

In this section, we discuss sequential algorithms for counting triangles and show that a simple modification to a state-of-the-art algorithm improves both runtime and space requirement. Although the modification seems quite simple, and others might have used it previously, to the best of our knowledge, our analysis is the first to show that such modification improves the performance significantly. Our parallel algorithms are based on this improved algorithm. 

A simple but efficient algorithm \cite{SURI,SCHANK} for counting triangles is: for each node $v\in V$, find the number of edges among its neighbors, i.e., the number of pairs of neighbors that complete a triangle with vertex $v$. In this method, each triangle $(u,v,w)$ is counted six times. Many existing algorithms \cite{Schank2005Triangle,LATAPY,SCHANK,chu11triangle,SURI} provide significant improvement over the above method. A very comprehensive survey of the sequential algorithms can be found in \cite{LATAPY,SCHANK}. One of the state of the art algorithms, known as NodeIterator++, as identified in two recent papers \cite{chu11triangle,SURI}, is shown in Figure \ref{algo:2}. Both \cite{chu11triangle} and \cite{SURI} use this algorithm as a basis of their external-memory and parallel algorithm, respectively.

\begin{figure}[!t]
\begin{center}
\fbox{
\begin{minipage}[c] {0.8\linewidth}
\begin{algorithmic}[1]
\STATE $T \leftarrow 0$ \hspace{0.2in} \{$T$ stores the count of triangles\}
\FOR {$v \in V$}
	\FOR {$u \in \mathcal{N}_v$ and $v \prec u$}
		\FOR {$w \in \mathcal{N}_v$ and $u \prec w$}
			\IF {$(u,w) \in E$}
				\STATE $T \leftarrow T+1$
			\ENDIF	
		\ENDFOR
	\ENDFOR
\ENDFOR
\end{algorithmic}
\end{minipage}
}\fsc
\end{center}
\fsc 
\caption{Algorithm NodeIterator++, where $\prec$ is the degree based ordering of the nodes defined in Equation \ref{eqn:order}.}
\label{algo:2}
\afsc 
\end{figure}

The algorithm NodeIterator++ uses a total ordering $\prec$ of the nodes to avoid duplicate counts of the same triangle. Any arbitrary ordering of the nodes, e.g., ordering the nodes based on their IDs, makes sure each triangle is counted exactly once -- counts only one among the six possible permutations. However, NodeIterator++ incorporates an interesting node ordering based on the degrees of the nodes, with ties broken by node IDs, as defined below:
\begin{equation} \label{eqn:order}
u \prec v \iff d_u < d_v \mbox{ or } (d_u = d_v \mbox{ and } u < v).
\end{equation}

\begin{definition} [effective degree] \label{dfn:eff_deg}
While $\mathcal{N}_v$ is the set of all neighbors of $v \in V$, let $N_v=\{u\in V | (u,v) \in E \land v \prec u\}$, i.e., $N_v$ is the set of neighbors $u$ of $v$ such that $v \prec u$. We define $\hat{d}_v = |N_v|$ as the effective degree of $v$.	
\end{definition}

The degree based ordering can improve the running time. Assuming $\mathcal{N}_v$, for all $v$, are sorted and a binary search is used to check $(u,w) \in E$, a runtime of $O\left(\sum_v{(\hat{d}_v d_v + \hat{d}_v^2 \log d_{\max})}\right)$ can be shown, where $d_{\max} = \max_v{d_v}$. This runtime is minimized when $\hat{d}_v$ values of the nodes are as close to each other as possible, although, for any ordering of the nodes, $\sum_v{\hat{d}_v} = m$ is invariant.

Notice that in the degree-based ordering, variance of the $\hat{d}_v$ values are reduced significantly. We also observe that for the same reason, degree-based ordering of the nodes helps keep the loads among the processors balanced, to some extent, in a parallel algorithm as discussed in detail in Section \ref{sec:patric}.

\begin{figure}
\begin{center}
\fbox{
\begin{minipage}[c] {0.8\linewidth}
\begin{algorithmic}[1]
\STATE \{Preprocessing: Line 2-6\}
\FOR {each edge $(u,v)$}
    \STATE if {$u \prec v$}, store $v$ in $N_u$
    \STATE else store $u$ in $N_v$
\ENDFOR
\FOR {$v \in V$}
	\STATE sort $N_v$ in ascending order
\ENDFOR
\STATE $T \leftarrow 0$ \hspace{0.2in} \{$T$ is the count of triangles\}
\FOR {$v \in V$}
	\FOR {$u \in N_v$}
		\STATE $S \leftarrow N_v \cap N_u$
		\STATE $T \leftarrow T+|S|$
	\ENDFOR
\ENDFOR
\end{algorithmic}
\end{minipage}
}
\end{center}
\fsc
\caption{Algorithm NodeIteratorN, a modification of NodeIterator++.}
\label{algo:nodeiteratorn}
\afsc
\end{figure}

A simple modification of NodeIterator++ is as follows: perform comparison $u \prec v$ for each edge $(u,v) \in E$ in a preprocessing step rather than doing it while counting the triangles. This preprocessing step reduces the total number of $\prec$ comparisons to $O(m)$ from $\sum_v{\hat{d}_v d_v}$ and allows us to use an efficient set intersection operation. For each edge $(v,u)$, $u$ is stored in $N_v$ if and only if $ v \prec u$. The modified algorithm NodeIteratorN is presented in Figure \ref{algo:nodeiteratorn}. All triangles containing node $v$ and any $u \in N_v$ can be found by set intersection $N_u \cap N_v$ (Line 10 in Figure \ref{algo:nodeiteratorn}). The correctness of NodeIteratorN is proven in Theorem \ref{thm:correct}.

\begin{theorem} \label{thm:correct}
Algorithm NodeIteratorN counts each triangle in $G$ once and only once.
\end{theorem}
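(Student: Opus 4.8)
The plan is to show two things: that every triangle in $G$ is counted \emph{at least} once by Algorithm NodeIteratorN, and that it is counted \emph{at most} once. The key observation is that the total ordering $\prec$ defined in Equation~\ref{eqn:order} is a strict linear order on $V$, so any triangle on vertices $\{x,y,z\}$ has a unique ordering of its vertices under $\prec$; say $x \prec y \prec z$. I will argue that this triangle is counted exactly once, namely when the outer loop variable $v$ equals the $\prec$-smallest vertex $x$ and the inner loop variable $u$ equals the middle vertex $y$, and at no other time.

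First I would establish the bookkeeping lemma about the sets $N_v$: after the preprocessing loop (Lines 2--6), for every edge $(a,b) \in E$ we have $b \in N_a$ if and only if $a \prec b$ (equivalently, $a \in N_b$ iff $b \prec a$), and each edge contributes to exactly one such set because $\prec$ is a total order and hence exactly one of $a \prec b$, $b \prec a$ holds. Thus $N_v = \{u : (u,v)\in E \land v \prec u\}$, which is precisely the set $N_v$ of Definition~\ref{dfn:eff_deg}. Sorting (Lines 7--8) does not change set membership, so this characterization persists into the counting phase.

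Next, for the counting phase (Lines 10--14): fix a triangle with vertices ordered $x \prec y \prec z$, so $(x,y),(y,z),(x,z)\in E$. When $v = x$ and $u = y$ in the loops, we have $y \in N_x$ (since $x \prec y$), so this iteration occurs; and $z \in N_x$ (since $x \prec z$) and $z \in N_y$ (since $y \prec z$), so $z \in N_x \cap N_y = S$, contributing $1$ to $T$ via this triangle. Conversely, suppose some iteration with outer variable $v$ and inner variable $u \in N_v$ counts a common neighbor $w \in N_v \cap N_u$; then $v \prec u$ (because $u\in N_v$), $v \prec w$ (because $w \in N_v$), and $u \prec w$ (because $w \in N_u$), so $v \prec u \prec w$, meaning $(v,u,w)$ is exactly the $\prec$-sorted form of that triangle. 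Hence the only iteration that can count the triangle $\{x,y,z\}$ is $(v,u) = (x,y)$, and as just shown it contributes exactly $1$. Summing over all triangles and noting that every unit added to $T$ corresponds to some triangle (the three edges $(v,u),(v,w),(u,w)$ all exist), we conclude $T$ equals the number of triangles in $G$, each counted once and only once.

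The routine parts are the set-membership bookkeeping; the one place to be slightly careful is that $\prec$ must genuinely be a total (in particular, antisymmetric and transitive) order, which is why ties in degree are broken by node IDs in Equation~\ref{eqn:order} --- without a tie-break the "unique sorted form" argument fails. I would state that fact explicitly and then the rest follows by the case analysis above.
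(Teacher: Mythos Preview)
Your proposal is correct and follows essentially the same approach as the paper's proof: order the three vertices of an arbitrary triangle as $x\prec y\prec z$, observe from the preprocessing that $y,z\in N_x$ and $z\in N_y$, so the triangle is counted when $(v,u)=(x,y)$, and then argue it cannot be counted elsewhere because $x\notin N_y$ and $x,y\notin N_z$. Your write-up is somewhat more explicit (spelling out the converse direction and the role of the tie-break in making $\prec$ a strict total order), but the underlying argument is the same.
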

\begin{proof}
Consider a triangle $(x_1, x_2, x_3)$ in $G$, and without the loss of generality, assume that $x_1 \prec x_2 \prec x_3$. By the construction of $N_x$ in the preprocessing step, we have $x_2, x_3 \in {N_{x_1}}$ and $x_3 \in N_{x_2}$. When the loops in Line 8-9 begin with $v = x_1$ and $u = x_2$, node $x_3$ appears in $S$ (Line 10-11), and the triangle $(x_1, x_2, x_3)$ is counted once. But this triangle cannot be counted for any other values of $v$ and $u$ because $x_1 \notin N_{x_2}$ and $x_1,x_2 \notin N_{x_3}$.
\end{proof}

In NodeIteratorN, when $N_v$ and $N_u$ are sorted, $N_u \cap N_v$ can be computed in $O(\hat{d}_u + \hat{d}_v)$ time. Then we have $O\left(\sum_{v\in V}{d_v\hat{d}_v}\right)$ time complexity for NodeIteratorN as shown in Theorem \ref{thm:time}, in contrast to $O\left(\sum_v{(\hat{d}_v d_v + \hat{d}_v^2 \log d_{\max})}\right)$ for NodeIterator++.

\begin{theorem} \label{thm:time}
The time complexity of algorithm NodeIteratorN is $O\left(\sum_{v\in V}{d_v\hat{d}_v}\right)$.
\end{theorem}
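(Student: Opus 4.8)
The plan is to bound the running time of the three phases of NodeIteratorN separately --- the preprocessing loop over the edges, the sorting of the neighbor lists $N_v$, and the main double loop with set intersections in Figure~\ref{algo:nodeiteratorn} --- and to show that each phase is $O\!\left(\sum_{v\in V} d_v\hat d_v\right)$. The preprocessing does one $\prec$-comparison and one list insertion per edge, so it costs $O(m)$; since $\sum_{v\in V}\hat d_v = m$ and $d_v\ge 1$ whenever $\hat d_v\ge 1$, we have $m \le \sum_{v\in V} d_v\hat d_v$, which puts this phase inside the claimed bound. Sorting $N_v$ costs $O(\hat d_v\log\hat d_v)$, and since $\log\hat d_v \le \hat d_v \le d_v$ this is $O(\hat d_v d_v)$ per node, hence $O\!\left(\sum_{v\in V} d_v\hat d_v\right)$ for all the lists together.

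The heart of the proof is the main double loop. After preprocessing, every $N_v$ is sorted, so each iteration --- indexed by a pair $(v,u)$ with $u\in N_v$ --- intersects the two sorted lists $N_v$ and $N_u$ in time $O(\hat d_v + \hat d_u)$, and the loop as a whole costs
\begin{equation*}
O\!\left(\sum_{v\in V}\ \sum_{u\in N_v}\bigl(\hat d_v + \hat d_u\bigr)\right).
\end{equation*}
I would split the inner sum into its two terms. The first one contributes $\sum_{v\in V}\hat d_v\,|N_v| = \sum_{v\in V}\hat d_v^2 \le \sum_{v\in V}\hat d_v d_v$, using $|N_v| = \hat d_v$ and $\hat d_v\le d_v$.

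For the second term, $\sum_{v\in V}\sum_{u\in N_v}\hat d_u$, I would interchange the order of summation. A node $u$ lies in $N_v$ exactly when $v$ is a neighbor of $u$ with $v\prec u$; since $\prec$ totally orders the $d_u$ neighbors of $u$, the number of such $v$ equals $d_u - \hat d_u$. Therefore
\begin{equation*}
\sum_{v\in V}\sum_{u\in N_v}\hat d_u \;=\; \sum_{u\in V}\hat d_u\,(d_u - \hat d_u) \;\le\; \sum_{u\in V} d_u\hat d_u .
\end{equation*}
Summing the contributions of the three phases yields $O\!\left(\sum_{v\in V} d_v\hat d_v\right)$. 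The only step that takes a moment's thought is this last interchange --- recognizing that $\hat d_u$ is charged once per lower-ranked neighbor of $u$, i.e.\ $d_u - \hat d_u$ times rather than $d_u$ times; everything else is routine manipulation with $\hat d_v\le d_v$ and $\sum_{v\in V}\hat d_v = m$.
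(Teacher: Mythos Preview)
Your proof is correct and follows essentially the same decomposition as the paper's: bound the preprocessing, sorting, and main double loop separately, with the intersection cost $O(\hat d_v+\hat d_u)$ summed over all $u\in N_v$. The only cosmetic difference is in simplifying $\sum_{v}\sum_{u\in N_v}(\hat d_v+\hat d_u)$: the paper rewrites it as a sum over undirected edges and observes that each $\hat d_w$ appears exactly $d_w$ times, yielding $\sum_w d_w\hat d_w$ on the nose, whereas you split into two terms and bound each by $\sum_v d_v\hat d_v$ --- but note your two exact expressions $\sum_v\hat d_v^2$ and $\sum_u\hat d_u(d_u-\hat d_u)$ already add up to $\sum_v d_v\hat d_v$ without any inequality, which is precisely the paper's identity seen from the other side.
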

\begin{proof}
 Time for the construction of $N_v$ for all $v$ is $O\left(\sum_{v}{d_v}\right)$ $= O(m)$, and sorting these $N_v$ requires $O\left(\sum_{v}{\hat{d}_v\log \hat{d}_v}\right)$ time. Now, computing intersection $N_v \cap N_u$ takes $O(\hat{d}_u + \hat{d}_v)$ time. Thus, the time complexity of NodeIteratorN is
\begin{eqnarray*}
& & O(m) + O\left(\sum_{v \in V}{\hat{d}_v\log \hat{d}_v}\right) + O\left(\sum_{v\in V}\sum_{u\in N_v}{(\hat{d}_u + \hat{d}_v)}\right) \\
& = & O\left(\sum_{v \in V}{\hat{d}_v\log \hat{d}_v}\right) + O\left(\sum_{(v,u)\in E}{(\hat{d}_u + \hat{d}_v)}\right) \\
& = & O\left(\sum_{v \in V}{\hat{d}_v\log \hat{d}_v}\right) + O\left(\sum_{v\in V}{d_v\hat{d}_v}\right) = O\left(\sum_{v\in V}{d_v\hat{d}_v}\right).
\end{eqnarray*}

The second last step follows from the fact that for each $v\in V$, term $\hat{d}_v$ appears $d_v$ times in this expression.
\end{proof}

Notice that the set intersection operation can also be used with NodeIterator++ by replacing Line 4-6 of NodeIterator++ in Figure \ref{algo:2} with the following three lines as shown in \cite{chu11triangle} (Page 674):

\begin{center}
\fbox{
\begin{minipage}[c] {0.5\linewidth}
\begin{algorithmic}[1]
\STATE $S \leftarrow \mathcal{N}_v \cap \mathcal{N}_u$
\FOR {$w \in S$ and $u \prec w$}
	\STATE $T \leftarrow T+1$
\ENDFOR
\end{algorithmic}
\end{minipage}
}
\end{center}

However, with this set intersection operation, the runtime of NodeIterator++ is $O\left(\sum_{v}{d_v^2}\right)$ since $|\mathcal{N}_v| = d_v$, and computing $\mathcal{N}_v \cap \mathcal{N}_u$ takes $O(d_u + d_v)$ time. Further, the memory requirement for NodeIteratorN is half of that for NodeIterator++. NodeIteratorN stores $\sum_{v}{\hat{d}_v} = m$ elements in all $N_v$ and
NodeIterator++ stores $\sum_{v}{d_v} = 2m$ elements. Here we would like to note that the two algorithms presented in \cite{Schank2005Triangle,LATAPY} take the same asymptotic time complexity as NodeIteratorN. However, the algorithm in \cite{Schank2005Triangle} requires three times more memory than NodeIteratorN. The algorithm in \cite{LATAPY} requires more than twice the memory as NodeIteratorN, maintains a list of indices for all nodes, and the hidden constant in the runtime can be much larger. Our experimental results show that NodeIteratorN is significantly faster than NodeIterator++ for both real-world and artificial networks as presented in Table \ref{table:compareNI+andNIN}.

\begin{table}
\tbl{Running time for sequential algorithms
\label{table:compareNI+andNIN}}
{
\begin{tabular}{|l|c|c|c|} \hline
\multirow{2}{*}{ \textbf{Networks}} &  \multicolumn{2}{|c|}{\textbf{Runtime (sec.)}} & \multirow{2}{*}{\textbf{Triangles}} \\
\cline{2-3}
 & \footnotesize{NodeIterator++} & \footnotesize{NodeIteratorN} &  \\ \hline
Email-Enron     &    0.14 &  0.07 &      0.7M  	 \\ \hline
web-BerkStan    &     3.5 &  1.4   &  64.7M \\ \hline
LiveJournal     &    106  &  42   & 285.7M \\ \hline
Miami           &     46.35  &  32.3   &  332M \\ \hline
PA(25M, 50)       &  690    & 360  & 1.3M  \\ \hline
\end{tabular}
}
\end{table}

\section{An Optimal Node Ordering} \label{sec:optimal_ordering}

A total ordering $\prec$ of the nodes helps avoid duplicate counts of the same triangle. Any ordering of the nodes, e.g., ordering based on node IDs, random ordering, $k$-coreness based ordering, make sure each triangle is counted exactly once. By avoiding duplicate counts, these orderings also improve running time of the algorithm. However, different orderings lead to different runtimes. Figure \ref{fig:order_comp} shows the runtime of our sequential algorithm for triangle counting with four orderings of nodes: ordering based on node IDs, degree, $k$-coreness, and random ordering. Node IDs and degrees are readily available with network data and do not require any additional computation. On the other hand, $k$-coreness based ordering requires computing coreness of nodes, and for random ordering, we generate $n$ random numbers. Figure \ref{fig:order_tc} shows the comparison of runtime of counting triangles without considering the cost for computing orderings. Figure \ref{fig:order_total} shows the comparison with total runtime of counting triangles and computing orderings. In both cases, degree based ordering provides the best runtime efficiency among all orderings. For networks with relatively even degree distribution such as Miami, all the orderings provide similar runtimes. However, for networks with skewed degree distribution, degree based ordering provides the least runtime. In our datasets, nodes with large degrees somehow appear at the beginning (having smaller IDs) giving ID based ordering almost the opposite effect of degree based ordering. As a result, ID based ordering provides the largest runtime for our datasets. 

\begin{figure*}[ht!]
	\centering
	\subfigure[Runtime for triangle counting without considering the runtime for computing ordering of nodes]{
		\includegraphics[width=0.47\textwidth]{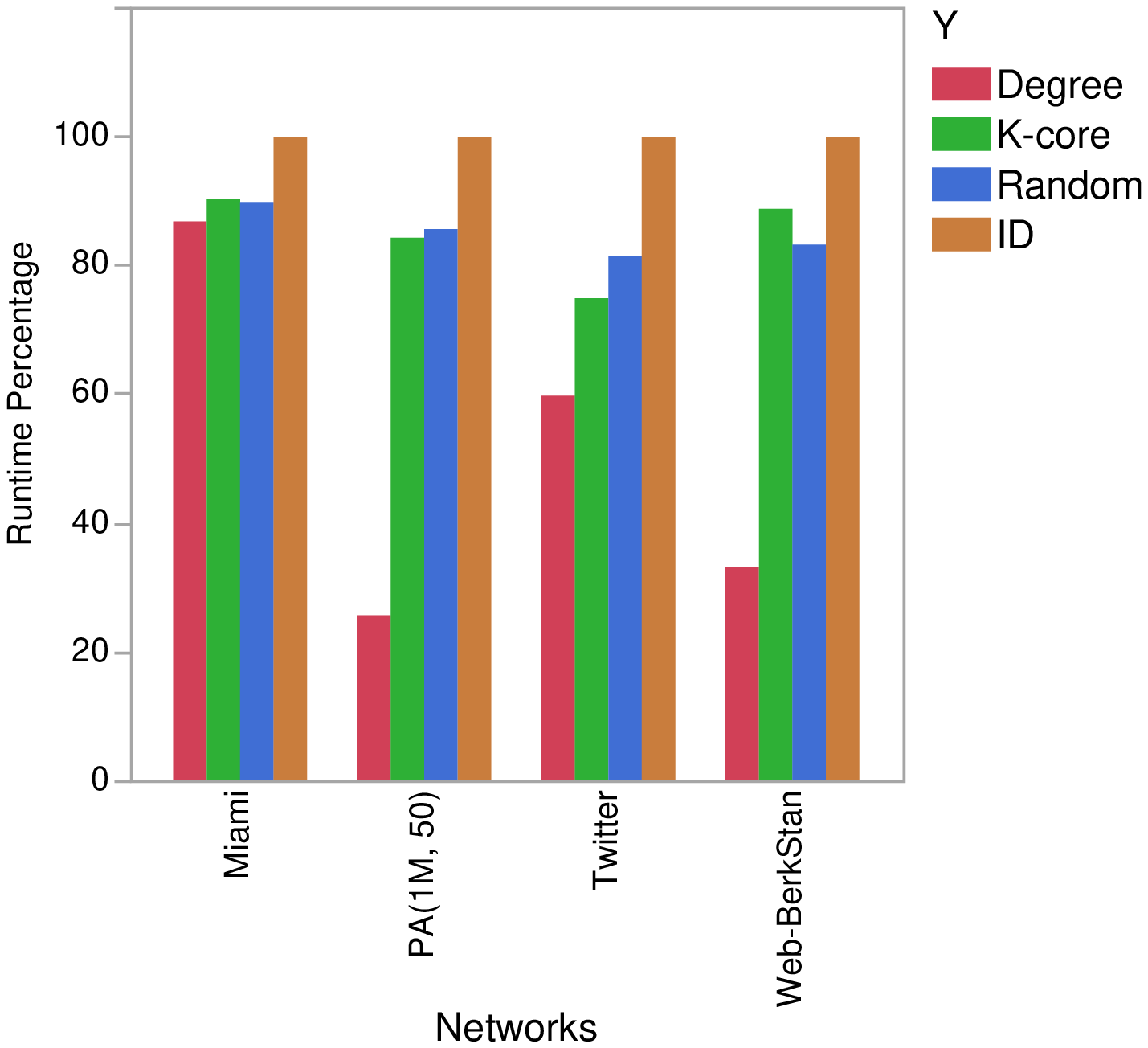}
		\label{fig:order_tc}
	}
	\subfigure[Total runtime for counting triangles and computing ordering of nodes]{
		\includegraphics[width=0.47\textwidth]{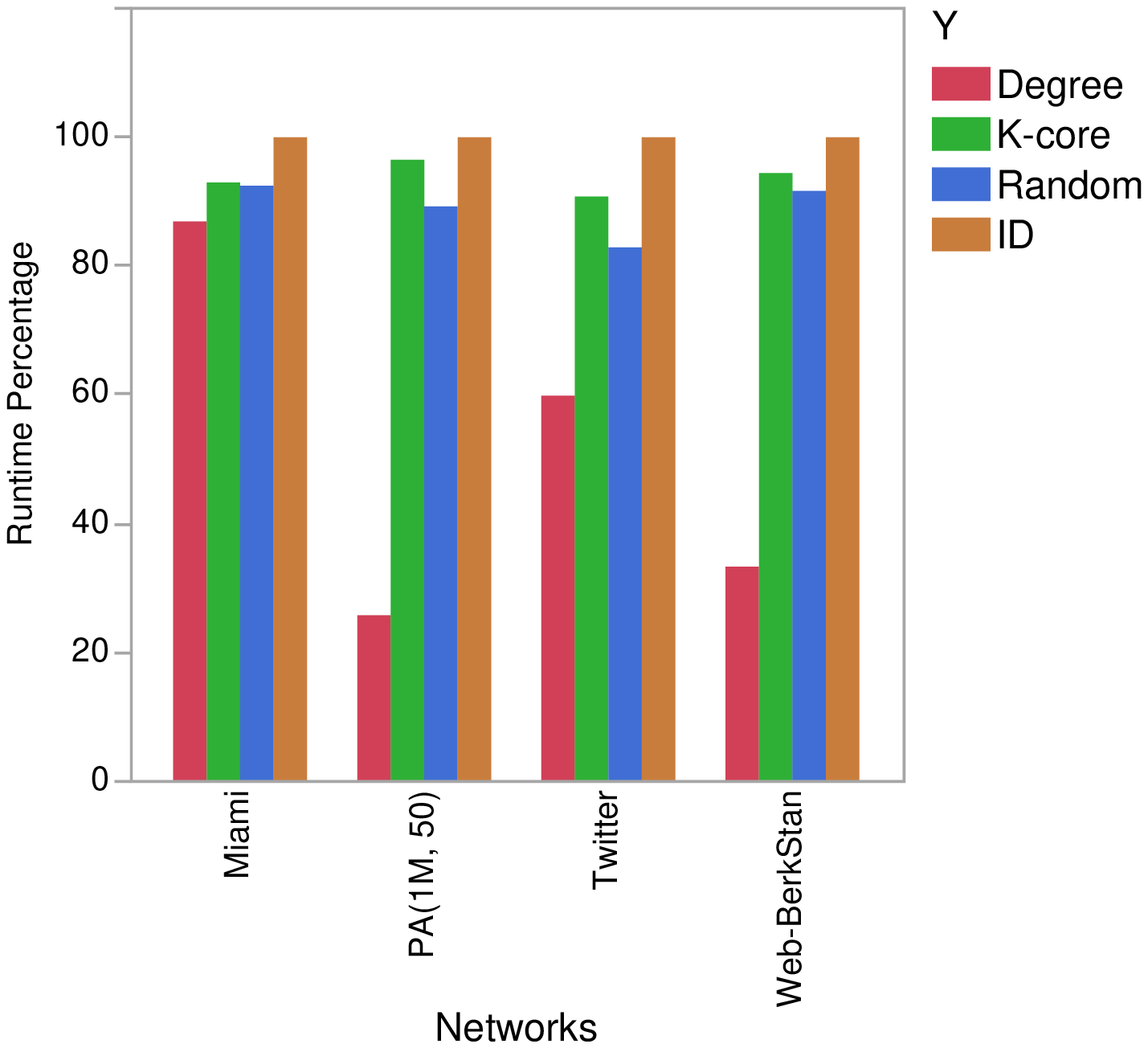}
		\label{fig:order_total}
	}
	\fsc
	\caption{Comparison of runtime of sequential triangle counting (NodeIteratorN) with four distinct orderings of nodes. For each network, we compute the percentage of runtime with respect to the maximum runtime given by any of these orderings. In all cases, the degree based ordering gives the least runtime. Note that we compute the average runtime from $25$ independent runs for the random ordering.}
	\label{fig:order_comp}
	\afsc
\end{figure*}

Now that our experimental results show degree based ordering provides the best runtime efficiency, next we show in Theorem \ref{thm:ordering} that the degree based ordering is, in fact, the optimal ordering that minimizes the runtime of algorithm \textit{NodeIteratorN}.

We denote the degree based ordering as $\prec_\mathcal{D}$ which is defined as follows: 
\begin{equation} \label{eqn:order2}
u {\prec}_\mathcal{D} v \iff d_u < d_v \mbox{ or } (d_u = d_v \mbox{ and } u < v).
\end{equation}
Assume there is another total ordering $\prec_\mathcal{K}$ based on some quantity $k_v$ of nodes $v$:
\begin{equation} \label{eqn:korder}
u {\prec}_\mathcal{K} v \iff k_u < k_v \mbox{ or } (k_u = k_v \mbox{ and } u < v).
\end{equation}
We now define a function which quantifies how ordering $\prec_\mathcal{K}$ agrees with $\prec_\mathcal{D}$ on the relative order of $x,y \in V$.


\begin{definition}[Agreement function Y] \label{dfn:agreement}
The \textit{agreement} function $Y: V \times V \rightarrow \mathbb{Z}$ is defined as follows:

\[ Y(x,y) = \left\{ 
  \begin{array}{l l}
    -1, & \quad \text{if $(x,y) \in E$ and $x {\prec}_{\mathcal{D}} y$ and $y \prec_\mathcal{K} x$ }\\
    1, & \quad \text{if $(x,y) \in E$ and $y {\prec}_{\mathcal{D}} x$ and $x {\prec}_{\mathcal{K}} y$}\\
    0, & \quad \text{Otherwise}
  \end{array} \right.\]
\end{definition}
  
It is, then, easy to see that $Y(x,y) = - Y(y,x)$.\\

We now prove an important result in the following lemma, which we subsequently use in Theorem \ref{thm:ordering}. 

\begin{lemma} \label{lemma:cxy}
For any $(x,y) \in E$, $Y(x,y)( d_x - d_y) \geq 0$.
\end{lemma}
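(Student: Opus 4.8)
The plan is to do a straightforward case analysis on the definition of $Y(x,y)$ for a fixed edge $(x,y) \in E$. The agreement function takes only three values, so there are only three cases to check, and in each case I would verify that the product $Y(x,y)(d_x - d_y)$ is nonnegative.

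\textbf{Case $Y(x,y) = 0$.} Here the product is trivially $0 \geq 0$, regardless of the degrees.

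\textbf{Case $Y(x,y) = -1$.} By definition this means $(x,y) \in E$, $x \prec_\mathcal{D} y$, and $y \prec_\mathcal{K} x$. The key observation is that $x \prec_\mathcal{D} y$ together with the definition of $\prec_\mathcal{D}$ in Equation~\ref{eqn:order2} forces $d_x \leq d_y$: indeed $x \prec_\mathcal{D} y$ holds iff $d_x < d_y$, or $d_x = d_y$ and $x < y$; in either subcase $d_x \leq d_y$, hence $d_x - d_y \leq 0$. Therefore $Y(x,y)(d_x - d_y) = (-1)(d_x - d_y) = d_y - d_x \geq 0$.

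\textbf{Case $Y(x,y) = 1$.} Symmetrically, this means $(x,y) \in E$, $y \prec_\mathcal{D} x$, and $x \prec_\mathcal{K} y$. As above, $y \prec_\mathcal{D} x$ implies $d_y \leq d_x$, so $d_x - d_y \geq 0$, and thus $Y(x,y)(d_x - d_y) = (1)(d_x - d_y) \geq 0$.

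In all three cases the product is nonnegative, which proves the lemma. There is no real obstacle here: the only thing to be careful about is correctly reading off from Definition~\ref{dfn:agreement} which of the two nodes is smaller in the $\prec_\mathcal{D}$ order in the nonzero cases, and then noting that $\prec_\mathcal{D}$ being a degree-based order means the $\prec_\mathcal{D}$-smaller node has degree $\leq$ that of the $\prec_\mathcal{D}$-larger node (strictly, the tie-breaking by node ID is exactly what lets degrees be equal without violating the order). The role of $\prec_\mathcal{K}$ is only to determine whether $Y$ is $-1$, $1$, or $0$; it plays no part in the inequality itself once the value of $Y$ is fixed.
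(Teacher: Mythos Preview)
Your proof is correct and follows essentially the same approach as the paper: a direct case analysis on the definition of $Y(x,y)$. The only cosmetic difference is that the paper organizes the cases by first separating ``orderings agree'' from ``orderings disagree'' and then splitting the latter on the sign of $d_x - d_y$, whereas you split directly on the three possible values of $Y(x,y)$; the underlying reasoning is identical.
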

\begin{proof}
Let $c_{xy} = Y(x,y)( d_x - d_y)$. If orderings $\prec_\mathcal{K}$ and $\prec_\mathcal{D}$ agree on the relative order of $x$ and $y$, then $Y(x,y)=0$ by definition, and hence, $c_{xy} = 0$. Otherwise, consider the following three cases. 
\begin{itemize}
\item $d_x = d_y$:
This gives $d_x-d_y=0$, and thus, $c_{xy} = 0$.
\item $d_x < d_y$:
We have $x \prec_\mathcal{D} y$ and $y \prec_\mathcal{K} x$, and thus, $Y(x,y) = -1$. Since $d_x - d_y < 0$, $c_{xy} > 0$.
\item $d_x > d_y$:
We have $y \prec_\mathcal{D} x$ and $x \prec_\mathcal{K} y$, and thus, $Y(x,y) = 1$. Since $d_x - d_y > 0$, $c_{xy} > 0$.
\end{itemize} 

Therefore, for any $(x,y) \in E$, $c_{xy}= Y(x,y)( d_x - d_y) \geq 0$.
\end{proof}

\begin{theorem} \label{thm:ordering}
Degree based ordering $\prec_\mathcal{D}$ minimizes the runtime for counting triangles using algorithm \textit{NodeIteratorN}.
\end{theorem}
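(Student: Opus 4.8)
The plan is to show that $\prec_\mathcal{D}$ minimizes the cost function $\Phi(\prec):=\sum_{v\in V}d_v\hat d_v$ that governs the running time of \textit{NodeIteratorN} by Theorem~\ref{thm:time}, over \emph{all} total orderings $\prec$ of $V$. It suffices to compare $\prec_\mathcal{D}$ against an arbitrary $\prec_\mathcal{K}$ of the form~(\ref{eqn:korder}), since every total order on the finite set $V$ is induced by assigning the vertices distinct values $k_v$; with the $k_v$ distinct the tie-breaking clause of~(\ref{eqn:korder}) never fires, which only simplifies the accounting below.

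First I would rewrite $\Phi$ as a sum over edges. For a fixed ordering $\prec$ and an edge $\{x,y\}\in E$, the preprocessing of \textit{NodeIteratorN} places the $\prec$-larger endpoint into the list $N_\cdot$ of the $\prec$-smaller endpoint, so $\hat d_v$ counts exactly the edges at $v$ whose other endpoint is $\prec$-larger than $v$. Hence
\[
\Phi(\prec)=\sum_{v\in V}d_v\hat d_v=\sum_{\{x,y\}\in E}w_\prec(x,y),\qquad w_\prec(x,y):=d_z \text{ for the }\prec\text{-smaller endpoint }z\in\{x,y\}.
\]
Under $\prec_\mathcal{D}$ the $\prec_\mathcal{D}$-smaller endpoint is the endpoint of smaller degree (when $d_x=d_y$ both endpoints give the same value), so $w_{\prec_\mathcal{D}}(x,y)=\min\{d_x,d_y\}$.

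The heart of the argument is the edge-wise identity $w_{\prec_\mathcal{K}}(x,y)-w_{\prec_\mathcal{D}}(x,y)=c_{xy}$, where $c_{xy}=Y(x,y)(d_x-d_y)$ is the quantity of Lemma~\ref{lemma:cxy}. I would verify it by the same trichotomy used in Definition~\ref{dfn:agreement}: if $d_x=d_y$, both weights equal the common degree, the gap is $0$, and $c_{xy}=0$ because the factor $d_x-d_y$ vanishes; if the degrees differ and $\prec_\mathcal{K}$ agrees with $\prec_\mathcal{D}$ on $\{x,y\}$, both weights pick the same endpoint, the gap is $0$, and $Y(x,y)=0$; if the degrees differ and $\prec_\mathcal{K}$ disagrees, say $d_x<d_y$ but $y\prec_\mathcal{K}x$, then $w_{\prec_\mathcal{D}}=d_x$, $w_{\prec_\mathcal{K}}=d_y$, $Y(x,y)=-1$, and the gap $d_y-d_x$ equals $-(d_x-d_y)=c_{xy}$ (the mirror case $d_x>d_y$ with $x\prec_\mathcal{K}y$ is symmetric, and $c_{xy}=c_{yx}$, so $c$ is well defined on the unordered edge). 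Summing over $E$ then gives $\Phi(\prec_\mathcal{K})-\Phi(\prec_\mathcal{D})=\sum_{\{x,y\}\in E}c_{xy}\ge 0$, where nonnegativity of each summand is exactly Lemma~\ref{lemma:cxy}; hence $\Phi(\prec_\mathcal{D})\le\Phi(\prec_\mathcal{K})$. Finally I would note that the remaining term in the runtime of Theorem~\ref{thm:time}, the sorting cost $O(\sum_v\hat d_v\log\hat d_v)$, is dominated by $\Phi(\prec)$ for every ordering (since $\log\hat d_v\le\hat d_v\le d_v$), so that minimizing $\Phi$ minimizes the asymptotic running time of \textit{NodeIteratorN}.

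I expect the main obstacle to be purely the edge-by-edge bookkeeping of the third step: aligning the three cases of $w_{\prec_\mathcal{K}}-w_{\prec_\mathcal{D}}$ with the three branches of $Y$, being careful with the degenerate case $d_x=d_y$ (where $Y$ itself need not vanish but $c_{xy}$ does), and being explicit that the comparison runs over every total order, not only over orders arising from some ``natural'' vertex statistic. Once that identity is established, the theorem follows immediately from Lemma~\ref{lemma:cxy}.
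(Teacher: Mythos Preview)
Your proposal is correct and follows essentially the same approach as the paper: both arguments reduce $\Phi(\prec_\mathcal{K})-\Phi(\prec_\mathcal{D})$ to the edge sum $\sum_{\{x,y\}\in E}Y(x,y)(d_x-d_y)$ and then invoke Lemma~\ref{lemma:cxy} term by term. The only cosmetic difference is that you rewrite $\Phi(\prec)=\sum_{\{x,y\}\in E}d_{\text{($\prec$-smaller endpoint)}}$ as an edge sum from the outset and compare edge-wise, whereas the paper first expresses $\tilde d_x=\hat d_x+\sum_{y\in\mathcal N_x}Y(x,y)$ at the vertex level and then regroups over edges using the antisymmetry $Y(y,x)=-Y(x,y)$; the resulting identity and the use of the lemma are identical.
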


\begin{proof}
Let  $\hat{d_v}$ be the effective degree of vertex $v$ with ordering $\prec_\mathcal{D}$. Then, the corresponding runtime for counting triangles is $\Theta \left( \sum_{i\in V}{d_i\hat{d}_i} \right)$. We provide a proof by contradiction. Assume that $\prec_\mathcal{D}$ is not an optimal ordering. Then there exists another ordering $\prec_\mathcal{K}$ that leads to a lower runtime for counting triangles than that of $\prec_\mathcal{D}$. Let $\prec_\mathcal{K}$ yields an effective degree $\tilde{d}$, the corresponding runtime for counting triangles is $ \Theta \left( \sum_{i\in V}{d_i\tilde{d}_i}\right)$. Let $C_{\mathcal{D}} = \sum_{i\in V}{d_i\hat{d}_i}$ and $C_{\mathcal{K}}=  \sum_{i\in V}{d_i\tilde{d}_i}$. Then, we have $C_{\mathcal{K}} <  C_{\mathcal{D}}$. 

Now, using Definition \ref{dfn:agreement}, the effective degree $\tilde{d_x}$ of node $x$ obtained by $\prec_\mathcal{K}$ can be expressed as,
\begin{eqnarray*}
\tilde{d}_x = \hat{d}_x + \sum_{y \in \mathcal{N}_x} {Y(x,y)}.
\end{eqnarray*}

Now, we have,

\begin{eqnarray*}
C_{\mathcal{K}}
&=& \sum_{x\in V}{d_x\tilde{d}_x} \\
&=& \sum_{x\in V}{d_x \left( \hat{d}_x + \sum_{y \in \mathcal{N}_x} {Y(x,y)}\right) } \\
&=& \sum_{x\in V} d_x \hat{d}_x + \sum_{x\in V} \left( d_x \sum_{y \in \mathcal{N}_x} {Y(x,y)}\right) \\
&=& \sum_{x\in V} d_x \hat{d}_x + \sum_{(x,y) \in E} \left( d_x Y(x,y) + d_y Y(y,x) \right) \\ 
&=& \sum_{x\in V} d_x \hat{d}_x + \sum_{(x,y) \in E} Y(x,y) \left( d_x - d_y \right).
\end{eqnarray*}

The second last step follows from rearranging terms of the second summation and distributing them over edges. The last step follows from the fact that $Y(y,x) = -Y(x,y)$. Now, from Lemma \ref{lemma:cxy} we have, $Y(x,y)( d_x - d_y) \geq 0$ for any $(x,y) \in E$.
Thus, $\sum_{(x,y) \in E} Y(x,y) \left( d_x - d_y \right) \geq 0$, and therefore, 
\begin{eqnarray*}
C_{\mathcal{K}} \geq \sum_{x\in V} d_x \hat{d}_x = C_{\mathcal{D}}.
\end{eqnarray*}

This contradicts our assumption of $C_{\mathcal{K}} < C_{\mathcal{D}}$.
Therefore, degree based ordering $\prec_\mathcal{D}$ is an optimal ordering which minimizes the runtime for counting  triangles of our algorithm.  
\end{proof}

We use algorithm \textit{NodeIteratorN} with degree based ordering in our parallel algorithms.

\section{A Fast Parallel Algorithm with Overlapping Partitioning}
\label{sec:patric}

In this section, we present our fast parallel algorithm for counting triangles in massive graphs with overlapping partitioning and novel load balancing schemes.

\subsection{Overview of the Algorithm}
\label{sec:ParallelOverview}

We assume that the graph is massive and does not fit in the local memory of a single computing node. Only a part of the entire graph is available to a processor. Let $p$ be the number of processors used in the computation. The graph is partitioned into $p$ partitions, and each processor $P_i$ is assigned one such partition $G_i(V_i, E_i)$ (formally defined below). $P_i$ performs computation on its partition $G_i$. The main steps of our fast parallel algorithm are given in Figure \ref{algo:patric}. In the following subsections, we describe the details of these steps and several load balancing schemes.

\begin{figure}[!ht]
\begin{center}
\fbox{
\begin{minipage}[c] {0.9\linewidth}
\begin{algorithmic}[1]
\STATE{Each processor $P_i$, in parallel, executes the following:(lines 2-4)}
    \STATE $G_i(V_i,E_i) \leftarrow \textsc{ComputePartition}(G, i)$
    \STATE $T_i \leftarrow \textsc{CountTriangles}(G_i, i)$
\STATE $\textsc{Barrier}$
\STATE Find $T=\sum_i{T_i}$ 
\RETURN  $T$
\end{algorithmic}
\end{minipage}
}
\end{center}
\fsc
\caption{The main steps of our fast parallel algorithm.}
\label{algo:patric}
\afsc
\end{figure}

\begin{figure*}[!th]
\hfill
\begin{minipage}[t]{.32\textwidth}
\begin{center}
\centerline{\includegraphics[width=1.0\textwidth]{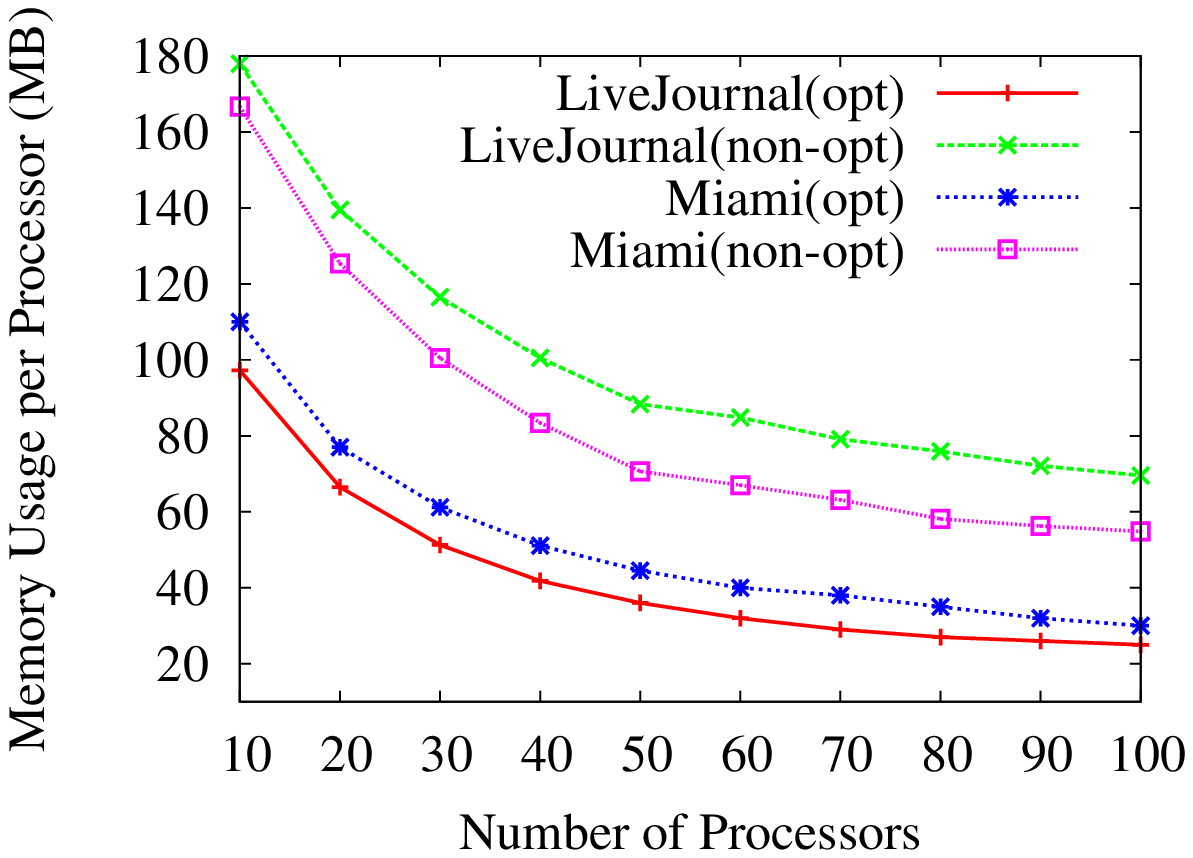}}
\caption{Memory usage with optimized and non-optimized data storing.}
\label{fig:memory}
\end{center}
\end{minipage}
\hfill
\begin{minipage}[t]{.32\textwidth}
\begin{center}
\centerline{
\raisebox{0.63in}{
\fbox{
\begin{minipage}[c] {0.90\linewidth}
	\scriptsize
\begin{algorithmic}[1]
\FOR {$v \in V_{i}$}
	\STATE sort $N_v$ in ascending order
\ENDFOR
\STATE $T \leftarrow 0$
\FOR {$v \in V_{i}^c$}
	\FOR {$u \in N_v$}
		\STATE $S \leftarrow N_v \cap N_u$
		\STATE $T \leftarrow T+|S|$
	\ENDFOR
\ENDFOR
\RETURN  $T$	
\end{algorithmic}
\end{minipage}
}
}
}
\caption{Algorithm executed by processor $P_i$ to count triangles in $G_i(V_i, E_i)$.}
\label{algo:tcount}
\end{center}
\end{minipage}
\hfill
\begin{minipage}[t]{.32\textwidth}
\begin{center}
\centerline{\scalebox{0.40}{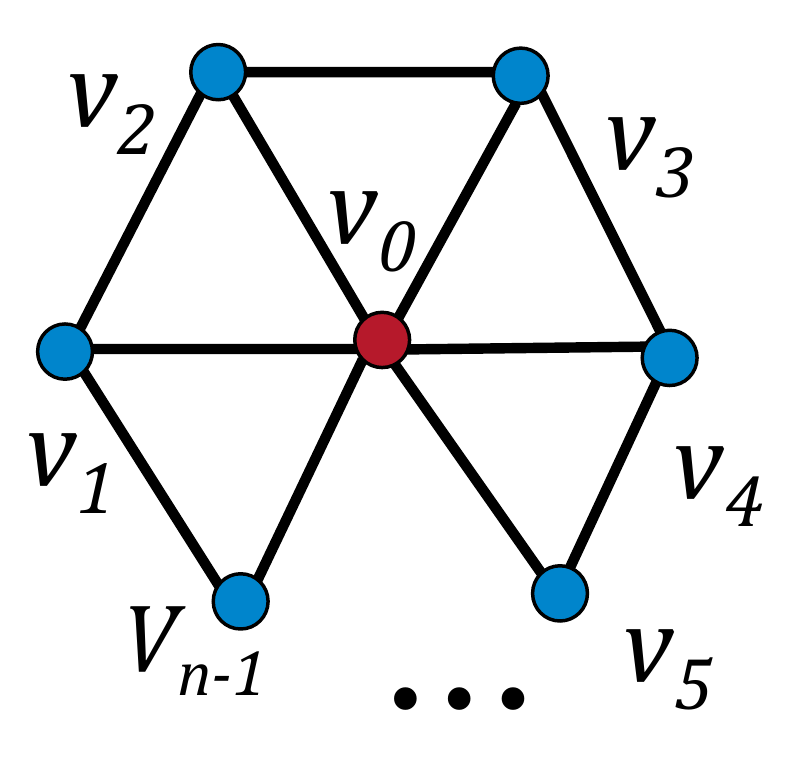}}
\caption{A network with a skewed degree distribution: $d_{v_0} = n-1$, $d_{v_{i\neq0}} = 3$.}
\label{fig:wheel}
\end{center}
\end{minipage}
\hfill
\end{figure*}

\subsection{Partitioning the Graph}
\label{sec:partition}

The memory restriction poses a difficulty where the graph must be partitioned in such a way that the memory required to store a partition is minimized and at the same time the partition contains sufficient information to minimize communications among the processors.
For the input graph $G(V,E)$, processor $P_i$ works on $G_i(V_i, E_i)$, which is a subgraph of $G$ induced by $V_i$. The subgraph $G_i$ is constructed as follows: First, set of nodes $V$ is partitioned into $p$ disjoint subsets $V_0^c, V_1^c, \dots, V_{p-1}^c$, such that, for any $j$ and $k$, $V_j^c \cap V_k^c = \emptyset$ and $\bigcup_{k} V_k^c = V$. Second, set $V_i$ is constructed containing all nodes in $V_i^c$ and $\bigcup_{v \in V_i^c} N_v$.  Edge set $E_i \subset E$ is the set of edges $\{(u,v): u \in V_i \text{\ and\ } v \in N_u \}$.

Each processor $P_i$ is responsible for counting triangles incident on the nodes in $V_i^c$. We call any node $v \in V_{i}^c$ a $core$ node of partition $i$. Each $v\in V$ is a core node in exactly one partition. How the nodes in $V$ are distributed among the core sets $V_i^c$ for all $P_i$ affect the load balancing and hence performance of the algorithm crucially. Later in Section \ref{sec:load}, we present several load balancing schemes and the details of how sets $V_i^c$ are constructed.

Now, $P_i$ stores the set of neighbors $N_v$ of all $v \in V_i$. Notice that for a node $w \in (V_i - V_i^c)$, neighbor set $N_w$ may contain some nodes $x \notin V_i$. Such nodes $x$ can be safely removed from $N_w$ and the number of triangles incident on all $v \in V_i^c$ can still be computed correctly. But, the presence of these nodes in $N_w$ does not affect the correctness of the algorithm either. However, as our experimental results in Figure \ref{fig:memory} show, we can save about 50\% of memory space by not storing such nodes $x\notin V_i$ in $N_w$. Figure \ref{fig:memory} also demonstrates the memory-scalability of our algorithm: as the more processors are used, each processor consumes less memory space.

\subsection{Counting Triangles}

Once each processor $P_i$ has its partition $G_i(V_i, E_i)$, it uses the improved sequential algorithm \emph{NodeIteratorN} presented in Section \ref{sec:seqalg} to count triangles in $G_i$ for each \emph{core} node $v \in V_i^c$. Neighbor sets $N_w$ for the nodes $w \in V_i - V_i^c$ only help in finding the edges among the neighbors of the core nodes.

To be able to use an efficient intersection operation, $N_v$ for all $v \in V_i$ are sorted. The code executed by $P_i$ is given in Figure \ref{algo:tcount}. Once all processors complete their counting steps, the counts from all processors are aggregated into a single count by an MPI reduce function, which takes $O(\log p)$ time. Ordering of the nodes, construction of $N_v$, and disjoint node partitions $V_i^c$ make sure that each triangle in the network appears exactly in one partition $G_i$. Thus, the correctness of the sequential algorithm \emph{NodeIteratorN} shown in Section \ref{sec:seqalg} ensures that each triangle is counted exactly once.

\subsection{Load Balancing}
\label{sec:load}

To reduce the runtime of a parallel algorithm, it is desirable that no processor remains idle and all processors complete their executions almost at the same time. In Section \ref{sec:seqalg}, we discussed how degree based ordering of the nodes can reduce the runtime of the sequential algorithm, and hence it reduces the runtime of the local computation in each processor $P_i$. We observe that, interestingly, this ordering also provides load balancing to some extent, both in terms of runtime and space, at no additional cost. Consider the example network shown in Figure \ref{fig:wheel}. With an arbitrary ordering of the nodes, $|N_{v_0}|$ can be as much as $n-1$, and a single processor which contains $v_0$ as a core node is responsible for counting all triangles incident on $v_0$. Then the runtime of the parallel algorithm can essentially be same as that of a sequential algorithm. With the degree-based ordering, we have $|N_{v_0}| = 0$ and $|N_{v_i}| \le 3$ for all $i$. Now if the core nodes are equally distributed among the processors, both space and computation time are almost balanced.

Although degree-based ordering helps mitigate the effect of skewness in degree distribution and balance load to some extent, working with more complex networks and highly skewed degree distribution reveals that distributing core nodes equally among the processors does not make the load well-balanced in many cases. Figure \ref{fig:speedup_initial} shows speedup of the parallel algorithm with an equal number of core nodes assigned to each processor. LiveJournal network shows poor speedup, whereas the Miami network shows a relatively better speedup. This poor speedup for LiveJournal network is a consequence of highly imbalanced computation load across the processors as shown in Figure \ref{fig:ljmiami_rankwise}. Unlike Miami network, LiveJournal network has a very skewed degree distribution. (Note that we used 100 processors for our experiments on load distribution. Although we could use a higher number of processors, using fewer processors helped demonstrate the pattern of imbalance of loads more clearly. In our subsequent experiments on scalability, we use a higher number of processors. In fact, we show that our algorithm scales to a larger number of processors when networks grow larger.)

In the next section, we present several load balancing schemes that improve the performance of our algorithm significantly.

\begin{figure*}[!tbh]
\hfill
\begin{minipage}[t]{.32\textwidth}
\begin{center}
\centerline{\includegraphics[width=1.0\textwidth]{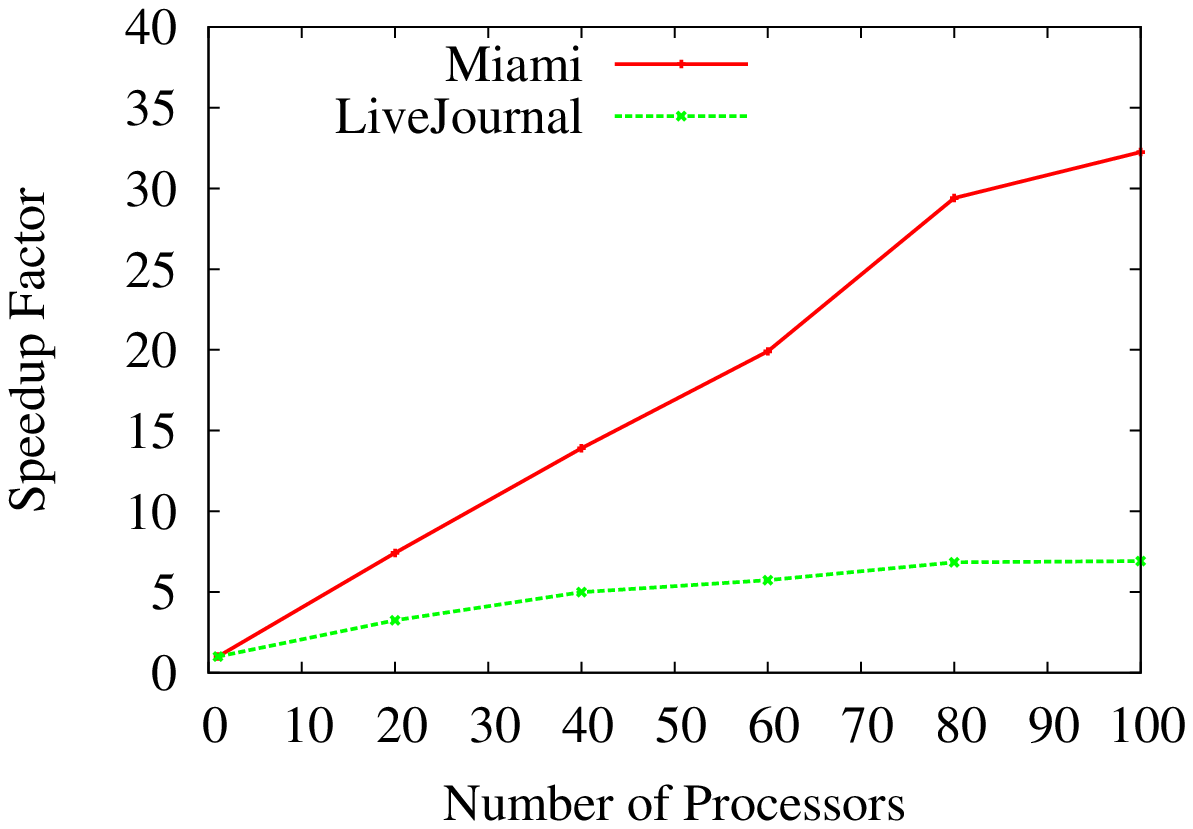}}
\fsc
\caption{Speedup with equal number of core nodes in all processors on two networks--Miami and LiveJournal.}
\label{fig:speedup_initial}
\afsc
\end{center}
\end{minipage}
\hfill
\begin{minipage}[t]{.32\textwidth}
\begin{center}
\centerline{\includegraphics[width=1.0\textwidth]{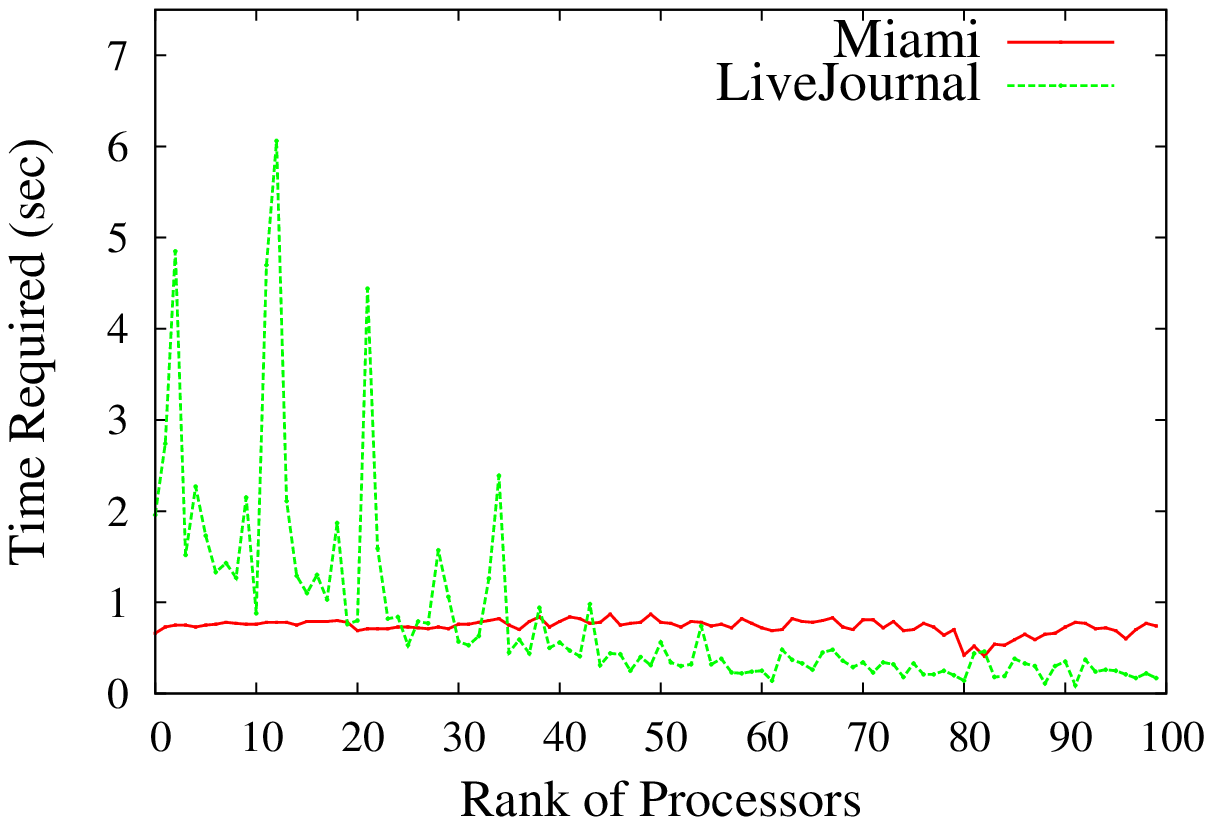}}
\fsc
\caption{Runtime of individual processors for equal number of core nodes on Miami and LiveJournal networks.}
\label{fig:ljmiami_rankwise}
\afsc
\end{center}
\end{minipage}
\hfill
\begin{minipage}[t]{.32\textwidth}
\begin{center}
\centerline{\includegraphics[width=1.0\textwidth]{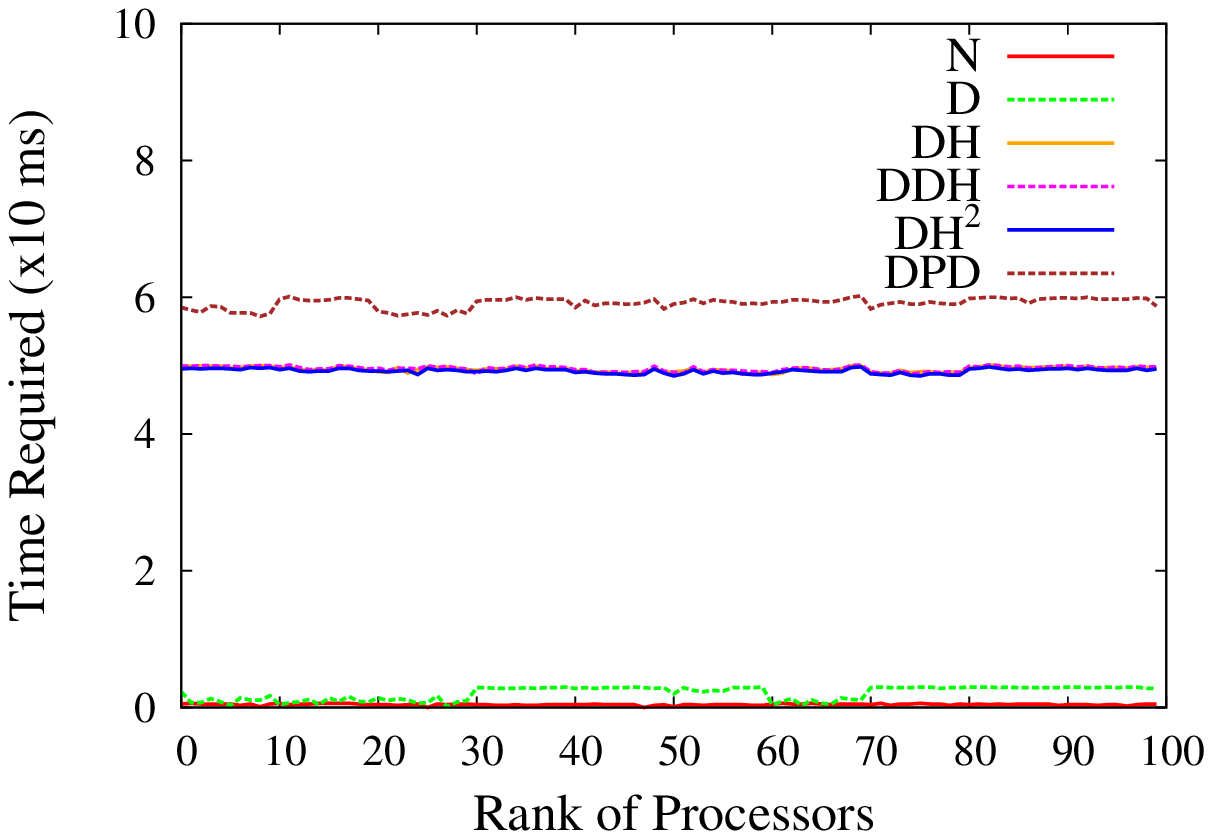}}
\fsc
\caption{Load balancing cost for LiveJournal network with different schemes.}
\label{fig:load_lj}
\afsc
\end{center}
\end{minipage}
\hfill
\vspace{-5pt}
\end{figure*}

\begin{figure*}[ht!]
	\centering
	\subfigure[Miami network]{
		\includegraphics[width=0.31\textwidth]{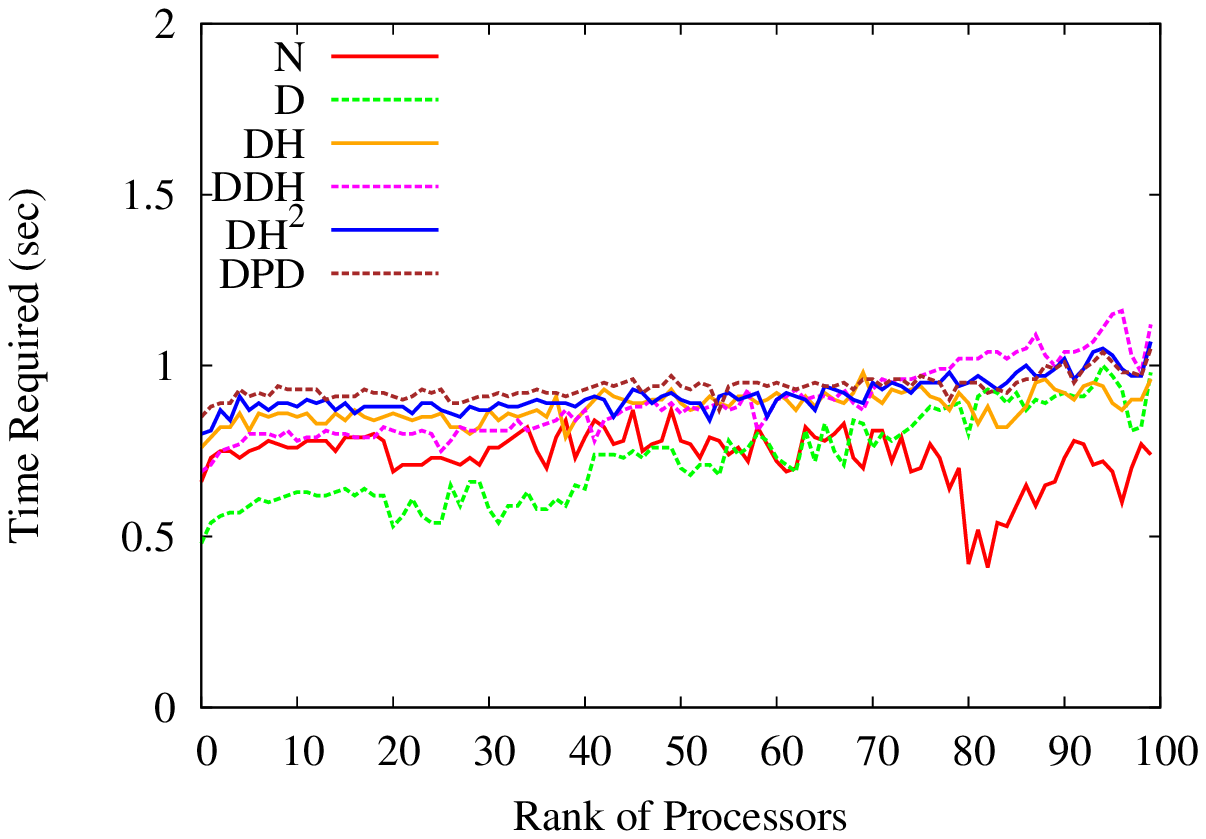}
		\label{fig:miami_total}
	}
	\subfigure[LiveJournal network]{
		\includegraphics[width=0.31\textwidth]{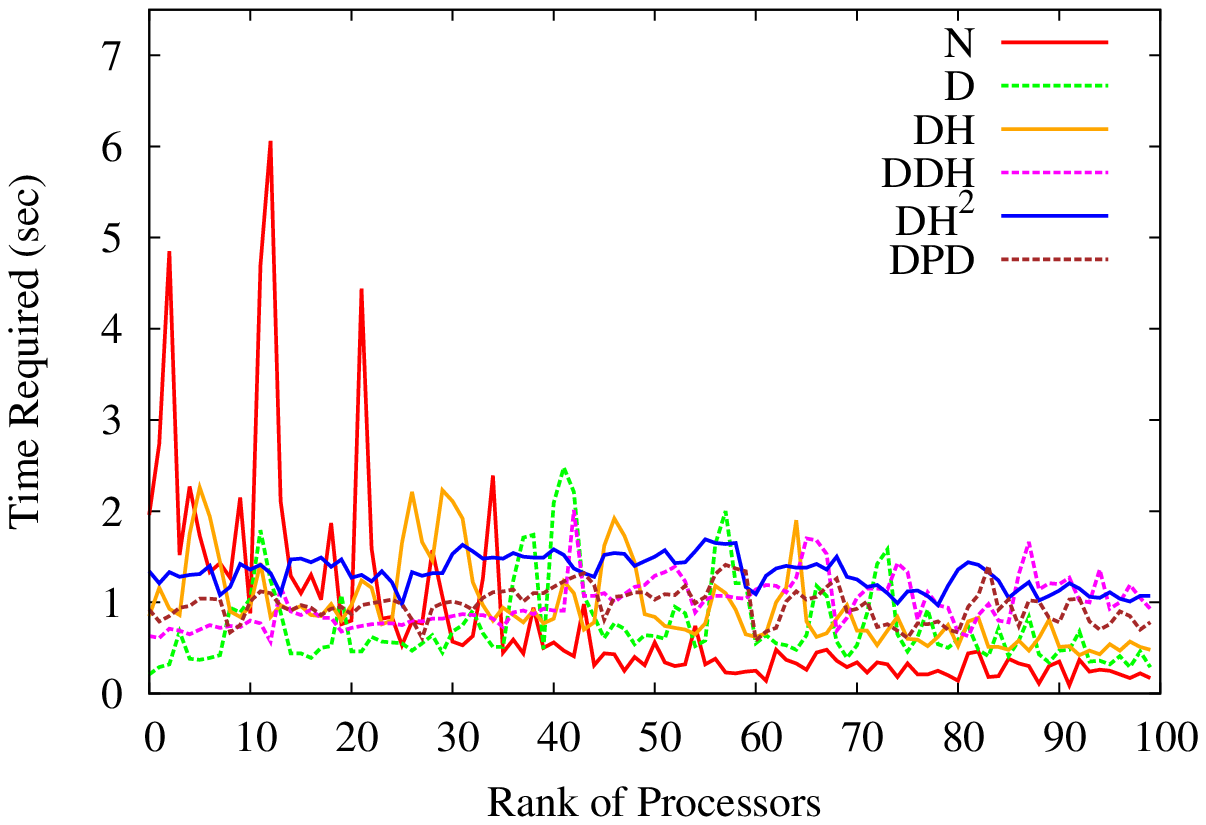}
		\label{fig:lj_total}
	}
	\subfigure[Twitter network]{
		\includegraphics[width=0.31\textwidth]{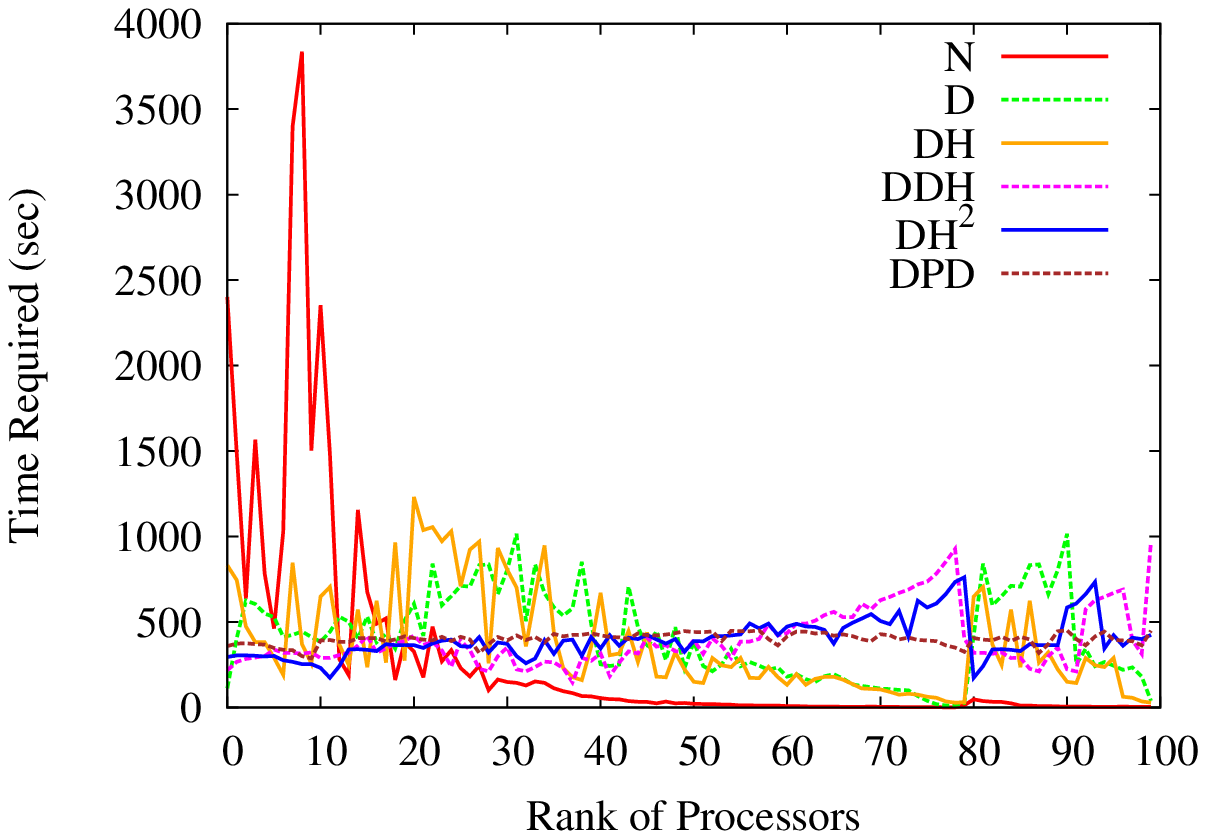}
		\label{fig:miami_total}
	}
	\fsc
	\caption{Load distribution among processors for LiveJournal, Miami, and Twitter networks by different schemes.}
	\label{fig:3net_total}
	\afsc
\end{figure*}

\subsubsection*{Proposed Load Balancing Schemes}
\label{sec:lbsec}

The balanced loads are determined before counting triangles. Thus, our parallel algorithm works in two phases: 
\begin{itemize}
\isc
\item[1.] \textit{Computing balanced load:} This phase computes partitions ${V_i^c}$ so that the computational loads are well-balanced.
\item[2.] \textit{Counting triangles:} This phase counts the triangles following the algorithms in Figure \ref{algo:patric} and \ref{algo:tcount}.
\end{itemize}

Computational cost for phase 1 is referred to as \textit {load-balancing cost}, for phase 2 as \textit{counting cost}, and the total cost for these two phases as \textit{total computational cost}. In order to be able to distribute load evenly among the processors, we need an estimation of computation load for computing triangles. For this purpose, we define a \textit{cost function} $f: V \rightarrow \mathbb{R}$, such that $f(v)$ is the computational cost for counting triangle incident on node $v$ (Lines $4$-$7$ in Figure \ref{algo:tcount}). Then, the total cost incurred to $P_i$ is given by $\sum\limits_{v \in {V_i^c}}{f(v)}$. To achieve a good load balancing, $\sum\limits_{v \in {V_i^c}}{f(v)}$ should be almost equal for all $i$. Thus, the computation of balanced load consists of the following two steps:
\begin{itemize}
\isc
\item[1.] \textbf{Computing $f$:} Compute $f(v)$ for each $v \in V$
\item[2.] \textbf{Computing partitions:} Determine $p$ disjoint partitions $V_i^c$ such that
\begin{eqnarray} \label{eqn:compload}
\sum_{v \in V_i^c}{f(v)} \approx \frac{1}{p}\sum_{v \in V}{f(v)}   \label{eqn:compload2}
\end{eqnarray}
\end{itemize}

The above computation must also be done in parallel. Otherwise, this computation takes at least $\Omega(n)$ time, which can wipe out the benefit gained from balancing load or even have a negative effect on the performance. Parallelizing the above computation, especially Step 2 (computing partitions), is a non-trivial problem. Next, we describe parallel algorithm to perform the above computation.

\vspace{0.1in}
\noindent
\textbf{Computing $f$:} \\
It might not be possible to exactly compute the value of $f(v)$ before the actual execution of counting triangles takes place. Fortunately, Theorem \ref{thm:time} provides a mathematical formulation of counting cost in terms of the number of vertices, edges, original degree $d$, and effective degree $\hat{d}$. Guided by Theorem 2, we have come up with several approximate cost function $f(v)$ which are listed in Table \ref{table:lbscheme}. Each function corresponds to one load balancing scheme. The rightmost column of the table shows identifying notations of the individual schemes.

\newcommand{\N}{\mathbb{N}}
\renewcommand{\d}{\mathbb{D}}
\renewcommand{\dh}{\mathbb{D}\mathbb{H}}
\newcommand{\dhsq}{\mathbb{D}\mathbb{H}^2}
\newcommand{\ddh}{\mathbb{D}\mathbb{D}\mathbb{H}}
\newcommand{\dpd}{\mathbb{D}\mathbb{P}\mathbb{D}}

\begin{table}[!ht]
\tbl{ Cost functions $f(.)$ for load balancing schemes.
\label{table:lbscheme}}{
    \begin{tabular}{ | l | l |}
    \hline
    {\bf Node Function} & {\bf Identifying Notation}  \\ \hline
	$f(v) = 1$   &  $\N$    \\ \hline
	$f(v) = d_v$  &  $\d$    \\ \hline
	$f(v) = \hat{d_v}$         &  $\dh$    \\ \hline
	$f(v) = d_v\hat{d_v}$   &  $\ddh$   \\ \hline
	$f(v) = \hat{d_v}^2$       &  $\dhsq$   \\ \hline
	$f(v) = \sum_{ u\in \mathcal{N}_v}{(\hat{d_v} + \hat{d_u})}$ & $\dpd$ \\ \hline
	\end{tabular}}
\end{table}

The input graph is given as a sequence of adjacency lists: adjacency list of the first node followed by that of the second node, and so on. The input sequence is considered divided by size (number of bytes) into $p$ chunks. However, it is made sure that adjacency list of a particular node reside in only one processor. Initially, processor $P_i$ stores the $i$th chunk in its memory. Let $C_i$ be the set of all nodes in the $i$-th chunk. Next, $P_i$ computes $f(v)$ for all nodes $v\in C_i$ as follows.

\vspace{-0.1in}
\begin{itemize}
\item \textbf{Scheme} $\N$: Function $f(v) = 1$ requires no computation. This scheme, essentially, assigns an equal number of core nodes to each processor.

\item \textbf{Scheme} $\d$: Function $f(v) = d_v$ requires no computation. This scheme, essentially, assigns an equal number of edges to each processor.

\item \textbf{Scheme} $\dh$: Computing function $f(v) = \hat{d_v}$ requires degrees of all $u \in \mathcal{N}_v$. Let $u\in C_j$. Then, $P_i$ sends a request message to $P_j$, and $P_j$ replies with a message containing $d_u$. 

\item \textbf{Scheme} $\ddh$: For $f(v) = d_v\hat{d_v}$, $\hat{d_v}$ is computed as above. 
\item \textbf{Scheme} $\dhsq$: For $f(v) = \hat{d_v}^2$, $\hat{d_v}$ is computed as above. 

\item \textbf{Scheme} $\dpd$: Function $f(v) = \sum_{ u\in \mathcal{N}_v}{(\hat{d_v} + \hat{d_u})}$ is computed as follows. 
\begin{itemize}
	\isc
	\item[i.] Each $P_i$ computes $\hat{d_v}$, $v\in C_i$, as discussed above.
	\item[ii.] Then $P_i$ finds $\hat{d_u}$ for all $u \in \mathcal{N}_v$: Let $u\in C_j$. $P_i$ sends a request message to $P_j$, and $P_j$ replies with a message containing $\hat{d_u}$.
	\item[iii.] Now, $f(v) = \sum_{ u\in \mathcal{N}_v}{(\hat{d_v} + \hat{d_u})}$ is computed using $\hat{d_v}$ and $\hat{d_u}$ obtained in $(i)$ and $(ii)$.  
\end{itemize}
\end{itemize}

\vspace{0.1in}
\noindent
\textbf{Computing partitions:} \\
Given that each processor $P_i$ knows $f(v)$ for all $v\in C_i$, our goal is to partition $V$ into $p$ disjoint subsets $V_i^c$ such that $\sum\limits_{v\in V_i^c}{f(v)} \approx \frac{1}{p} \sum\limits_{v\in V}{f(v)}$. 

We first compute cumulative sum $F(t)= \sum\limits_{v=0 }^{t}{f(v)}$ in parallel by using a parallel prefix sum algorithm \cite{AluruPrefix}. Processor $P_i$ computes and stores $F(t)$ for nodes $t \in C_i$. This computation takes $O\left(\frac{n}{p} + \log p\right)$ time. Notice that $P_{p-1}$ computes $F(n-1)= \sum\limits_{v=0 }^{n-1}{f(v)}$, cost for counting all triangles in the graph. $P_{p-1}$ then computes $\alpha = \frac{1}{P}\sum\limits_{v\in V}{f(v)} = \frac{1}{p}F(n-1)$ and broadcast $\alpha$ to all other processors.
Now, let $V_i^c= \{x_i, x_i+1 \dots, x_{(i+1)}-1\}$ for some node $x_i\in V$. We call $x_i$ the \textit{start} or \textit{boundary} node of partition $i$. Node $x_j$ is the $j$th boundary node if and only if $F(x_j - 1) < j\alpha \le F(x_j)$ or equivalently,  $x_j = argmin_{v \in V} \left(F(v) \ge j\alpha\right)$. A chunk $C_i$ may contain $0,1,$ or multiple boundary nodes in it. Each $P_i$ finds the boundary nodes $x_j$ in its chunk: we use the algorithm presented in \cite{alam13randomnet} to compute boundary nodes of partitions, which takes $O(n/p + p)$ time in the worst case. At the end of this execution, each processor $P_i$ knows boundary nodes $x_i$ and $x_{(i+1)}$. Now $P_i$ can construct $V_i^c$ and compute its partition $G_i(V_i,E_i)$ as described in Section \ref{sec:partition}. 

Since scheme $\dpd$ requires two levels of communication for computing $f(.)$, it has the largest load balancing cost among all schemes. Computing $f(.)$ for $\dpd$ requires $O(\frac{m}{p} + p \log p)$ time. Computing partitions has a runtime complexity of $O(\frac{m}{p} + p)$. Therefore, the load balancing cost of $\dpd$ is given by $O(\frac{m}{p} + p \log p)$. Figure \ref{fig:load_lj} shows an experimental result of the load balancing cost for different schemes on the LiveJournal network. Scheme $\N$ has the lowest cost and $\dpd$ the highest. Schemes $\dh$, $\dhsq$, and $\ddh$ have a quite similar load balancing cost. However, since scheme $\dpd$ gives the best estimation of the counting cost, it provides better load balancing. Figure \ref{fig:3net_total} demonstrates \textit{total computation cost} (load) incurred in individual processors with different schemes on Miami, LiveJournal, and Twitter networks. Miami is a network with an almost even degree distribution. Thus, all load balancing schemes, even simpler schemes like $\N$ and $\d$, distribute loads almost equally among processors. However, LiveJournal and Twitter have a very skewed degree distribution. As a result, partitioning the network based on number of nodes ($\N$) or degree ($\d$) do not provide good load balancing. The other schemes capture the computational load more precisely and produce a very even load distribution among processors. In fact, for such networks, scheme $\dpd$ provides the best load balancing.


\begin{figure*}[ht!]
	\centering
		\subfigure[Miami network]{
		\includegraphics[width=0.30\columnwidth]{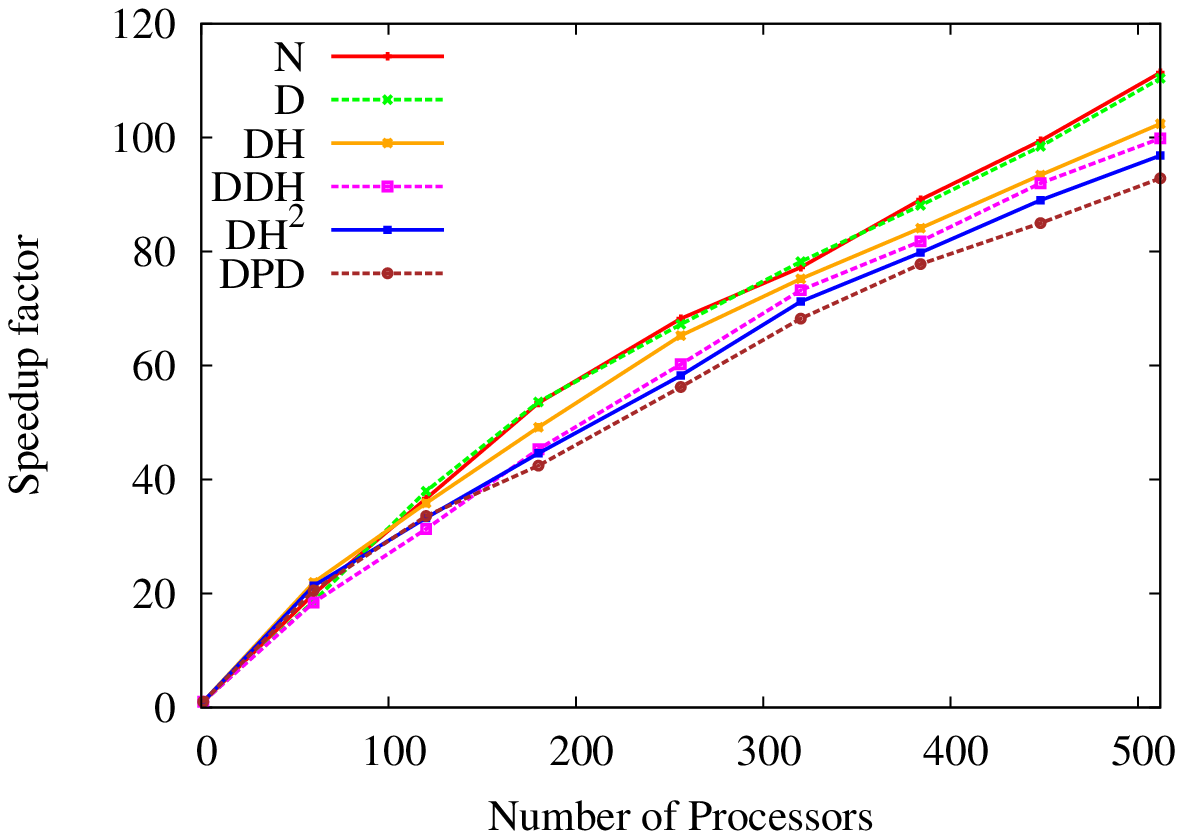}
		\label{fig:miami_speedallv}
		}
		\subfigure[LiveJournal network]{
			\includegraphics[width=0.30\columnwidth]{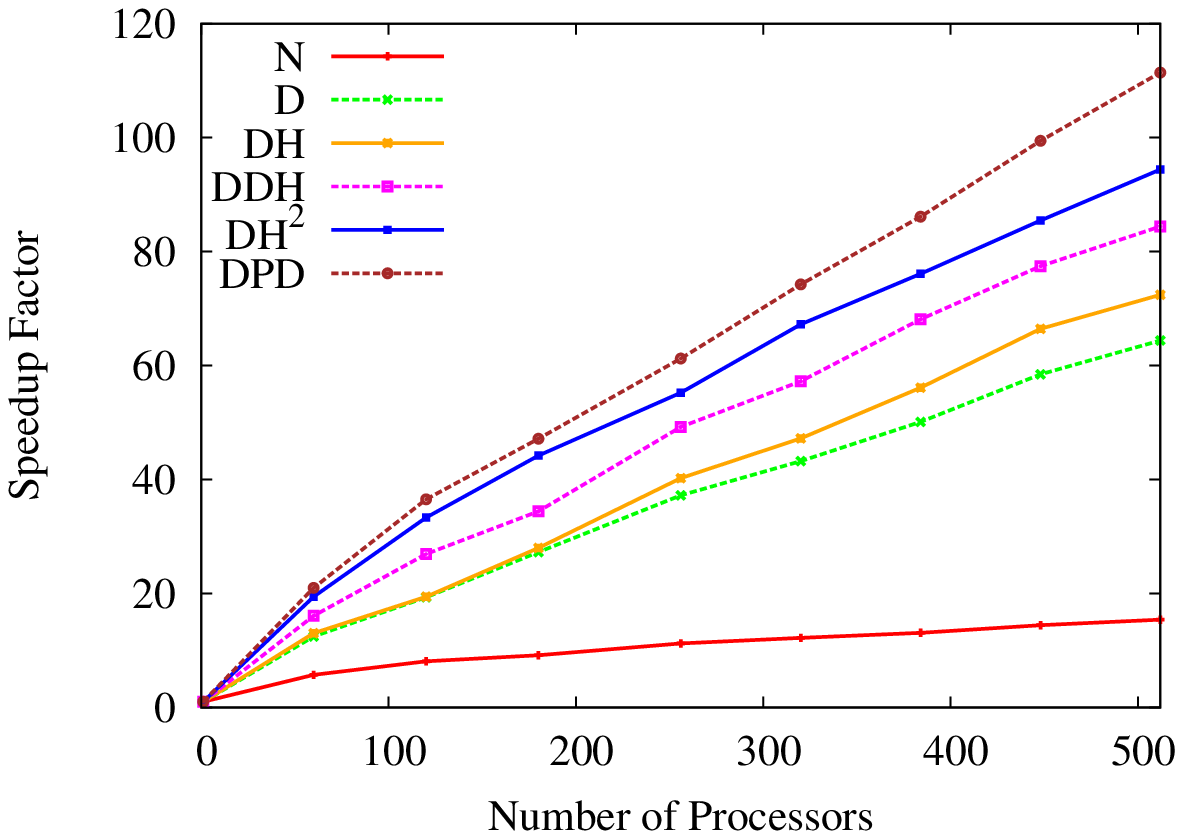}
			\label{fig:lj_speedallv}
		}
		\subfigure[Twitter network]{
			\includegraphics[width=0.30\columnwidth]{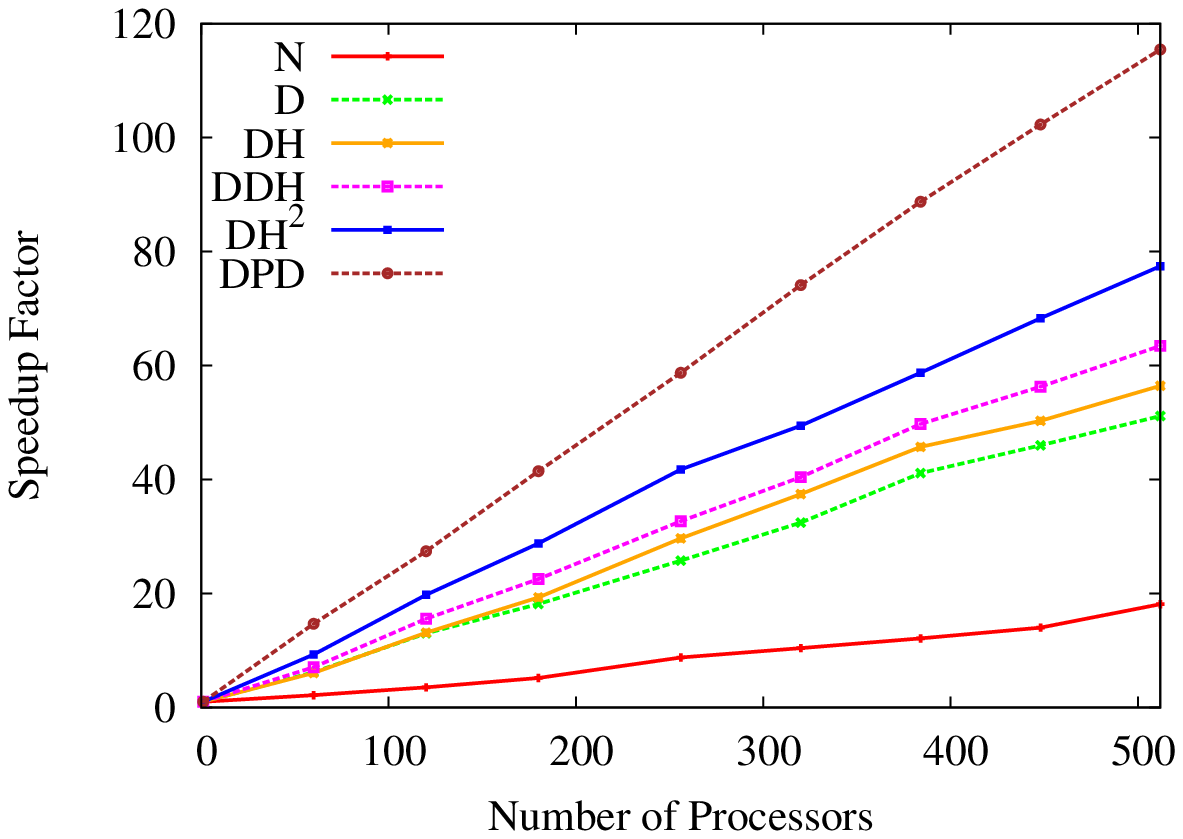}
			\label{fig:twitter_speedallv}
		}
	\fsc
	\caption{Speedup gained from different load balancing schemes for LiveJournal, Miami and Twitter networks. }
	\label{fig:3net_speedallv}
	\afsc
\end{figure*}

\subsection{Performance Analysis}
\label{sec:patric_perf}

In this section, we present the experimental results evaluating the performance of our algorithm and the load balancing schemes. 

\subsubsection{Strong Scaling} \label{subsec:strongscaling}

Strong scaling of a parallel algorithm shows how much speedup a parallel algorithm gains as the number of processors increases. Figure \ref{fig:3net_speedallv} shows strong scaling of our algorithm on LiveJournal, Miami and Twitter networks with different load balancing schemes. The speedup factors of these schemes are almost equal on Miami network. Schemes $\N$ and $\d$ have a little better speedup than the others. On the contrary, for LiveJournal and Twitter networks, speedup factors for different load balancing schemes vary quite significantly. Scheme $\dpd$ achieves better speedup than other schemes. As discussed before, for Miami network, all load balancing schemes distribute loads equally among processors. This produces an almost same speedup on Miami network with all schemes. A lower load balancing cost of schemes $\N$ and $\d$ (Figure \ref{fig:load_lj}) yields a little higher speedup. However, for LiveJournal and Twitter networks, scheme $\dpd$ gives the best load distribution (Figure \ref{fig:3net_total}) and thus provides the best speedups. Although $\dpd$ has a higher load balancing cost than others, the benefit gained from $\dpd$ as an even load distribution outweighs this cost. Thus we recommend for using $\dpd$ on real-world big graphs. Our subsequent results will be based on scheme $\dpd$.

\subsubsection{Weak Scaling}
Weak scaling of a parallel algorithm shows the ability of the algorithm to maintain constant computation time when the problem size grows proportionally with the increasing number of processors. We use PA($n, m$) networks for this experiment, and for $x$ processors, we use network PA($x/10 \times 1M, 50$). The weak scaling of our algorithm is shown in Figure \ref{fig:weak_scaling_tc}. Triangle counting cost remains almost constant (blue line). Since the load-balancing step has a communication overhead of $O(p\log p)$, load-balancing cost increases gradually with the increase of processors. It causes the total computation time to grow slowly with the addition of processors (red line). Since the growth is very slow and the runtime remains almost constant, the weak scaling of our algorithm is very good.

\subsubsection{Comparison with Previous Algorithms}

\begin{table}[tb!]
\tbl{Runtime Performance of our fast parallel algorithm using 200 processors and the algorithm in \cite{SURI}.
\label{table:patric}}{
\begin{tabular}{|l|c|c|c|} \hline
\multirow{2}{*}{ \textbf{Networks}} &  \multicolumn{2}{|c|}{\textbf{Runtime (sec.)}} & \multirow{2}{*}{\textbf{Triangles}} \\
\cline{2-3}
 & Our algorithm & \cite{SURI} &  \\ \hline
Twitter   &      $9.4$m &  423m &  $34.8$B  \\ \hline 
	web-BerkStan  &    $0.10$s  & 1.70m  & $65$M    \\ \hline 
	LiveJournal   &    $0.8$s    & 5.33m  & $286$M   \\ \hline 
	Miami         &    $0.6$s     & -- & $332$M    \\ \hline 
	PA(1B, 20)         &    $15.5$m     & -- &  $0.403$M  \\ \hline 
    \end{tabular}}
\end{table}

The runtime of our algorithm on several real and artificial networks are shown in Table
\ref{table:patric}. We also compare our algorithm with another distributed-memory parallel algorithm for counting triangles given in \cite{SURI}. We select three of the five networks used in
\cite{SURI}. Twitter and LiveJournal are the two largest among the
networks used in \cite{SURI}. We also use web-BerkStan which has a
very skewed degree distribution. No artificial network is used in
\cite{SURI}. For all of these three networks, our algorithm is more than 45
times faster than the algorithm in \cite{SURI}. The improvement over~\cite{SURI} 
is due to the fact that their algorithm generates a huge
volume of intermediate data, which are all possible 2-paths centered
at each node. The amount of such intermediate data can be
significantly larger than the original network. For example, for the
Twitter network, 300B 2-paths are generated while there are only 2.4B
edges in the network. The algorithm in \cite{SURI} shuffles and
regroups these 2-paths, which take significantly larger time and also
memory. 

\begin{figure*}[!tbh]
	\hfill
	\begin{minipage}[t]{.47\textwidth}
		\begin{center}
			\centering
			\includegraphics[scale=0.5]{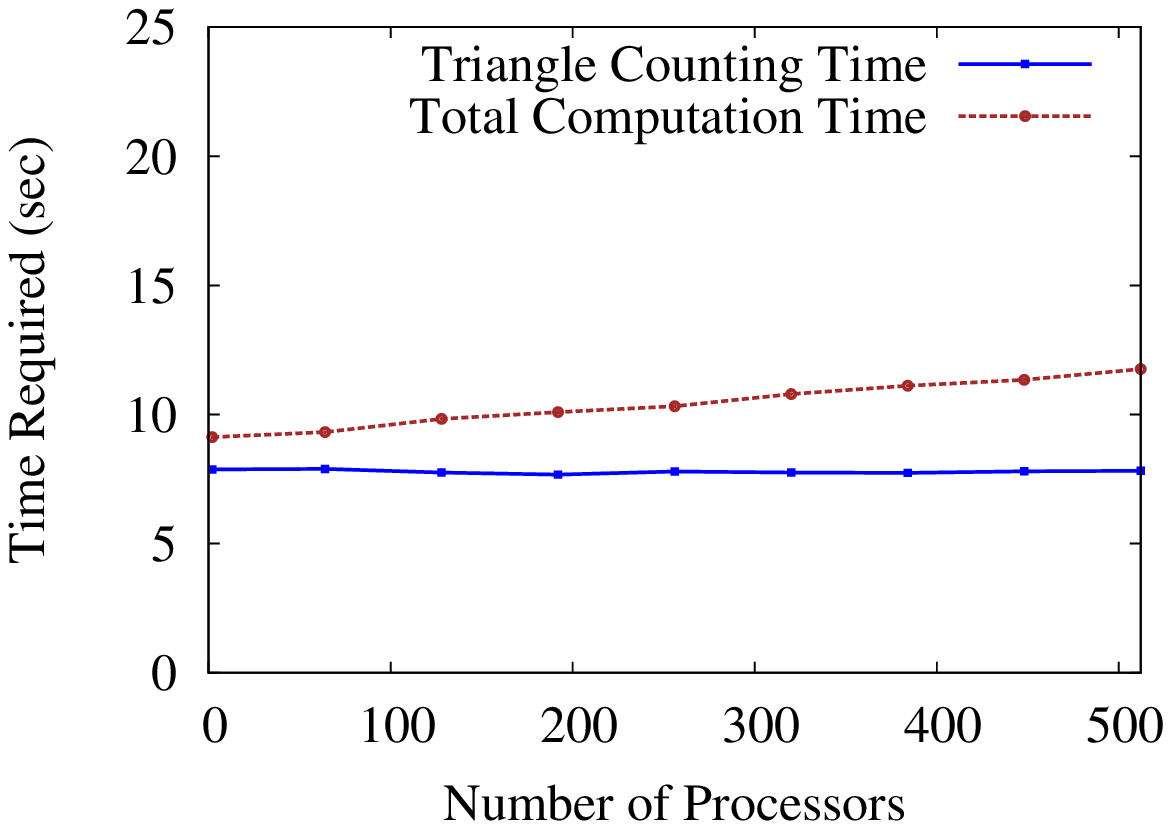}
			\caption{Weak scaling on PA($p/10 \times 1M, 50$) networks.}
			\label{fig:weak_scaling_tc}
		\end{center}
	\end{minipage}
	\hfill
	\begin{minipage}[t]{.47\textwidth}
		\begin{center}
			\centering
			\includegraphics[scale=0.5]{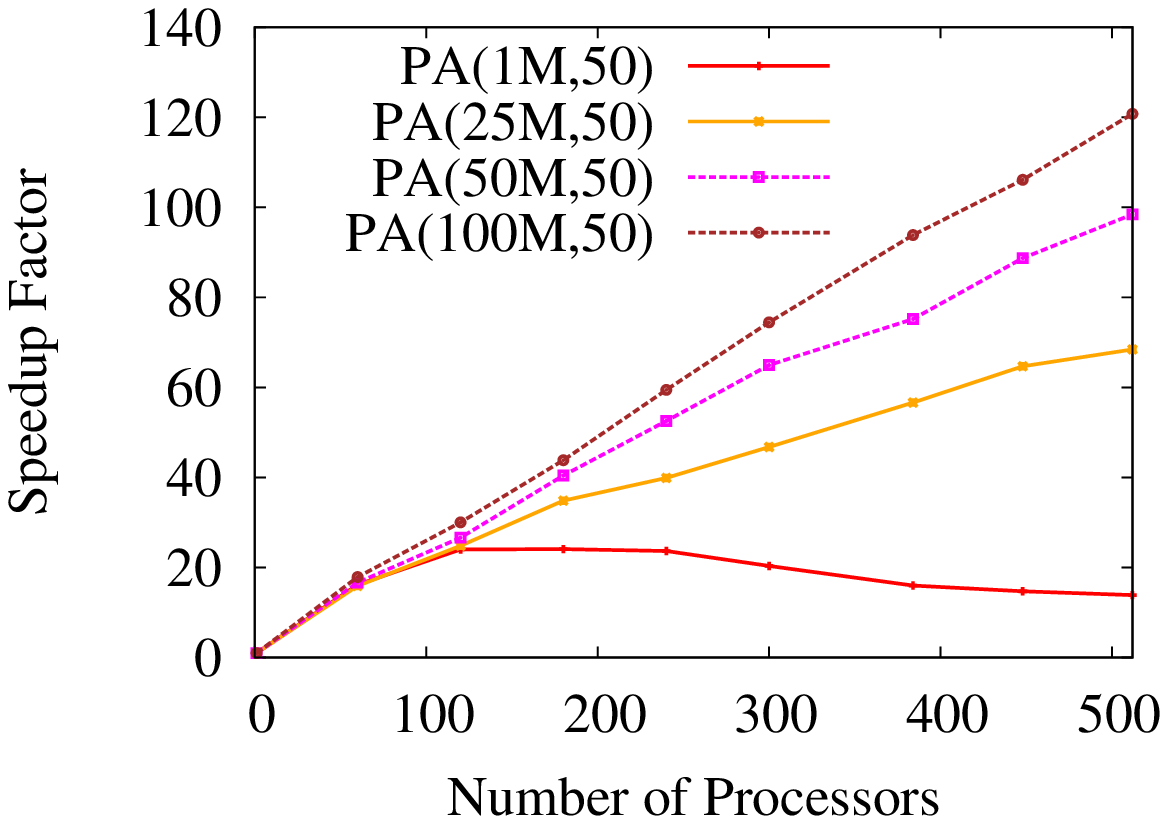}
			\caption{Improved scalability with increased network size.}
			\label{fig:size_scale}
		\end{center}
	\end{minipage}
	\hfill
	\vspace{-5pt}
\end{figure*}

\subsubsection{Scaling with Network Size} \label{subsec:net_size_scaling}

The load-balancing cost of our algorithm, as shown in Section \ref{sec:load}, is $O(m/p + p\log p)$ where $p$ is the number of processors used in the computation. For the algorithm given in Figure \ref{algo:tcount}, the counting cost is $O(\sum_{v\in V_i^c}\sum_{u\in N_v}{(\hat{d}_u + \hat{d}_v)})$. Thus, the total computational cost of our algorithm is,
\begin{eqnarray*} \label{eqn:totalcost}
F(p) &=& O(\frac{m}{p} + p\log p + \max_i \sum_{v\in V_i^c}\sum_{u\in N_v}{(\hat{d}_u + \hat{d}_v)}) \\
 & \approx & c_1 \frac{m}{p} + c_2 p\log p + c_3\max_i \sum_{v\in V_i^c}\sum_{u\in N_v}{(\hat{d}_u + \hat{d}_v)},
\end{eqnarray*}
where $c_1$, $c_2$, and $c_3$ are  constants. Now, quantity denoting computation cost, ($c_1 m/p + c_3 \sum_{v\in V_i^c}\sum_{u\in N_v}{(\hat{d}_u + \hat{d}_v)}$), decreases with the increase of $p$, but communication cost $p\log p$ increases with $p$. Thus, initially when $p$ increases, the overall runtime decreases (hence the speedup increases). But, for some large value of $p$, the term $p \log p$ becomes dominating, and the overall runtime increases with the addition of further processors. Notice that communication cost $p\log p$ is independent of network size. Therefore, when networks grow larger, computation cost increases, and hence they scale to a higher number of processors, as shown in Figure \ref{fig:size_scale}. This is, in fact, a highly desirable behavior of our parallel algorithm which is designed for real world massive networks. We need large number of processors when the network size is large and computation time is high.

Consequently, there is an optimal value of $p$, $p_{opt}$, for which the total time $F(p)$ drops to its minimum and the speedup reaches its maximum. To have an  estimation of $p_{opt}$, we replace $d$ and $\hat{d}$ with average degree $\bar{d}$ and $\bar{d}/2$, respectively, and have $F(p)  \approx  c_1n\bar{d}/p + c_2p\log p + c_3n\bar{d}^2/p$. At the minimum point, $\frac{\mathrm d}{\mathrm d p} \big( F(p) \big) = 0$, which gives the following relationship of $p_{opt}$, $n$ and $\bar{d}$: $p^2(1 + \log p) = \frac{n}{c_2} (c_3\bar{d}^2 + c_1\bar{d})$. Thus, $p_{opt}$ has roughly a linear relationship with $\sqrt{n}$ and $\bar{d}$. 

Assume that a network with the number of nodes $n'$ and average degree $\bar{d'}$ experimentally shows an optimal $p$ of $p'_{opt}$. Then, another network with $n$ nodes and an average degree $\bar{d}$ has an approximate optimum number of processors, 
\begin{equation} \label{eqn:estimation}
p_{opt} \approx p'_{opt}\frac{\bar{d}}{\bar{d'}}\sqrt{\frac{n}{n'}}.
\end{equation}

Thus, if we compute $p'_{opt}$ experimentally by trial and error for an available network (let's call it the \textit{base network}), we can estimate $p_{opt}$ for all other networks. The base network might be a small network for which this trial-error should be fairly fast. From the result presented in Figure \ref{fig:size_scale}, the network $PA(1M,50)$ can serve as a base network, and $p_{opt}$ for the network $PA(25M,50)$ can be estimated as $p_{opt}\approx 600$ which is approximately $5$ times of that of $PA(1M,50)$ ($p'_{opt}\approx 120$). The relationship is also justified when we vary average degree of the networks.

\section{A Space-efficient Parallel Algorithm with Non-overlapping Partitioning}
\label{sec:space}

The algorithm presented in Section \ref{sec:patric} divides the input graph into a set of $p$ overlapping partitions where some edges $(u,v)$ might be repeated (overlapped) in multiple partitions. Such overlapping allows the algorithm to count triangles without any communication among processors leading to faster computation. Further, since each processor works on a part of the entire graph, the algorithm can work on large graphs.
However, for instances where the graph has a high average degree or a few nodes with high degrees, overlapping partitions can be large. Now, if overlapping of edges among partitions are avoided, we can further improve the space efficiency of the algorithm. In this section, we present a parallel algorithm which divides the input graph into non-overlapping partitions. Each edge resides in a single partition, and the sizes of all partitions sum up to the size of the graph. Non-overlapping partitioning leads to a more space efficient algorithm and thus allows to work on larger graphs. In fact, non-overlapping partitioning offers as much as $\bar{d}$ (average degree of the graph) times space saving over the overlapping partitions. Table \ref{table:mem-comp} shows the space requirement of non-overlapping partitions which is up to $25$ times smaller than that overlapping partitions for the networks we experimented on.  

\begin{table}[!htb]
	\tbl{Memory usage of our algorithms (size of the largest partition) with both overlapping and non-overlapping partitioning. Number of partitions used is $100$.
	\label{table:mem-comp}}{
	\begin{tabular}{| l | l | l | l | l | l |} \hline
		\multirow{2}{*}{ \textbf{Networks}} &  \multicolumn{2}{|c|}{\textbf{Memory (MB)}} & \multirow{2}{*}{\textbf{Ratio}} & \multirow{2}{*}{\textbf{$\bar{d}$}} & \multirow{2}{*}{\textbf{$d_{max}$}}\\
		\cline{2-3}
		& Non-overlap. & Overlap. &  & &\\ \hline
		web-Google   &    $1.49$    & $11.3$  & $7.85$ &$11.6$ & $6332$ \\ \hline 
		LiveJournal   &    $9.41$    & $110.75$  & $11.75$ &$18$ &  $20333$ \\ \hline 
		Miami         &    $10.63$     & $109.58$ & $10.32$ & $47.6$ &  $425$ \\ \hline 
		Twitter   &      $265.82$ &  $4254.18$ & $16.004$ &$57.1$ & $1001159$ \\ \hline 
		PA(10M, 100)         &    $121.11$     & $2120.94$ & $17.5$ &$100$ & $25068$\\ \hline 
		PA(1M, 1000)         &    $138.20$     & $3427.36$ & $24.8$ &$1000$ & $19255$\\ \hline 
	\end{tabular}}
\end{table}

Notice the space requirement of the other distributed-memory parallel algorithms for counting the exact number of triangles in literature: the first MapReduce based algorithm proposed in \cite{SURI} generates a huge amount of intermediate data which is significantly larger than the original network (e.g., $125$ times larger for Twitter network). The second MapReduce based algorithm proposed in \cite{SURI}, the partition-based algorithm, has a space requirement of $O(mp)$ for the Map phase (with $p$ partitions), which is $p$ times larger than the network size. The algorithm in \cite{park-cikm} also requires $O(mp)$ memory space. Our space-efficient algorithm requires only a total of $O(m)$ space for storing all $p$ partitions.

\subsection{Overview of Our Space-Efficient Parallel Algorithm}
\label{sec:space-ov}
This algorithm partitions the input graph $G(V,E)$ into a set of $p$ partitions constructed as follows: set of nodes $V$ is partitioned into $p$ disjoint subsets $V_i^c$, such that, for $0 \leq j,k \leq p-1$ and $j \neq k$, $V_j^c \cap V_k^c = \emptyset$ and $\bigcup_{k} V_k^c = V$. Edge set $E_i^c$, constructed as $E_i^c = \{(u,v):u\in V_i^c, v\in N_u\}$, constitutes the $i$-th partition. Note that this partition is non-overlapping-- each edge $(u,v) \in E$ resides in one and only one partition. For $0 \leq j,k \leq p-1$ and $j \neq k$, $E_j^c \cap E_k^c = \emptyset$ and $\bigcup_{k} E_k^c = E$. The sum of space required to store all partitions equals to the space required to store the whole graph. 

Now, to count triangles incident on $v \in V_i^c$, processor $P_i$ needs $N_u$ for all $u \in N_v$ (Lines 7-10, Fig. \ref{algo:nodeiteratorn}). If $u\in V_i^c$, information of both $N_v$ and $N_u$ is available in the $i$-th partition, and $P_i$ counts triangles incident on $(v,u)$ by computing $N_u \cap N_v$. However, if $u \in V_j^c$, $j \neq i$, $N_u$ resides in partition $j$. Processor $P_i$ and $P_j$ exchange message(s) to count triangles incident on such $(v,u)$. This exchanging of messages introduces a communication overhead which is a crucial factor on the performance of the algorithm. We devise an efficient approach to reduce the communication overhead drastically and improve the performance significantly. Once all processors complete the computation associated with respective partitions, the counts from all processors are aggregated.

\subsection{An Efficient Communication Approach}
\label{sec:space-tc}
Processor $P_i$ and $P_j$ require to exchange messages for counting triangles incident on $(v,u)$ where $v\in V_i^c$ and $u\in N_v \cap V_j^c $. A simple way to count such triangles is as follows: $P_i$ requests $P_j$ for $N_u$. $P_j$ sends $N_u$ to $P_i$, and $P_i$ counts triangles incident on the edge $(v,u)$ by computing $N_v \cap N_u$. For further reference, we call this approach as \textit{direct approach}. This approach requires exchanging as much as $O(m\bar{d})$ messages ($\bar{d}$ is the average degree of the network) which is substantially larger than the size of the graph. 

The above approach has a high communication overhead due to exchanging a large number of redundant messages leading to a large runtime. Assume $u\in N_{v_1}\cap N_{v_2}\cap\dots\cap N_{v_k}$, for $v_1, v_2, \dots, v_k \in V_i^c$. Then $P_i$ sends $k$ separate requests for $N_u$ to $P_j$ while computing triangles incident on $v_1$, $v_1$, $\dots$, $v_k$. In response to those requests, $P_j$ sends $N_u$ to $P_i$ $k$ times. 

One seemingly obvious way to eliminate redundant messages is that instead of requesting $N_u$ multiple times, $P_i$ stores it in memory for subsequent use. However, space requirement for storing all $N_u$ along with the partition $i$ itself is the same as that of storing an overlapping partition. This diminishes our original goal of a space-efficient algorithm.

Another way of eliminating message redundancy is as follows. When $N_u$ is fetched, $P_i$ completes all computation that requires $N_u$: $P_i$ finds all $k$ nodes $v\in V_i^c$ such that $u\in N_v$. It then performs all $k$ computations $N_v \cap N_u$ involving $N_u$ and discards $N_u$. Now, since $u\in N_v \implies v\notin N_u$, $P_i$ cannot extract all such nodes $v$ from the message $N_u$. Instead, $P_i$ requires to scan through its whole partition to find such nodes $v$ where $u \in N_v$. This \textit{scanning} is very expensive-- requiring $O(\sum_{v\in V_i^c} d_v)$ time for each message-- which might even be slower than the direct approach with redundant messages.

All the above techniques to improve the efficiency of  \textit{Direct} approach introduce additional space or runtime overhead. Below we propose an efficient approach to reduce message exchanges drastically without adding further overhead.

\textbf{Reduction of messages.} To compute $N_v\cap N_u$ for $v\in V_i^c$ and $u\in N_v\cap V_j^c$, $P_i$ requires fetching $N_u$ from partition $j$. Instead, $P_j$ can perform the same computation if $P_i$ sends $N_v$ to $P_j$. Specifically, we consider the following approach: $P_i$ sends  $N_v$ to $P_j$ instead of fetching $N_u$. $P_j$ counts triangles incident on edge $(u,v)$  by performing the operation $N_v \cap N_u$. We call this approach as \textit{Surrogate} approach.

On a surface, this approach might seem to be a simple modification from \textit{Direct} approach. However, notice the following implication which is very significant to the algorithm: once $P_j$ receives $N_v$, it can extract the information of all nodes $u$, such that $u$ is in both $N_v$ and $V_j^c$, by scanning $N_v$ only. For all such nodes $u$, $P_j$ counts triangles incident on edge $(u,v)$  by performing the operation $N_v \cap N_u$. $P_j$ then discards $N_v$ since it is no longer needed. Note that extracting all $u$ such that $u\in N_v$ and $u\in V_j$ requires $O(d_v)$ time (compare this to $O(\sum_{v\in V_i^c} d_v)$ time of direct approach for the same purpose). In fact, this extraction can be done while computing triangles $N_v\cap N_u$ for first such $u$. This saves from any additional overhead.

As we noticed, if delegated, $P_j$ can count triangles on multiple edges $(u,v)$ from a single message $N_v$, where $v\in V_i^c$ and $u\in N_v\cap V_j^c$. Thus $P_i$ does not require to send $N_v$ to $P_j$ multiple times for each such $u$. However, to avoid multiple sending, $P_i$ needs to keep track of which processors it has already sent $N_v$ to. This \textit{message tracking} needs to be done carefully, otherwise any additional space or runtime overhead might compromise the efficiency of the overall approach.

It is easy to see that one can perform the above tracking by maintaining $p$ \textit{flag} variables, one for each processor. Before sending $N_v$ to a particular processor $P_j$, $P_i$ checks $j$-th flag to see if it is already sent. This implementation is conceptually simple but cost for resetting flags for each $v\in V_i^c$ sums to a significant cost of $O(|V_i^c|.p)$. Now notice that an overhead of $O(|V_i^c|.p)$ will lead to a runtime of at least $\Omega(n)$ because $\max_i|V_i^c| \ge \frac{n}{p}$. An algorithm with $\Omega(n)$ will not be scalable to a large number of processors since with the increase of $p$, the runtime $\Omega(n)$ does not decrease. 

Now, observe the following simple yet useful property of $N_v$: \textit{Since $V_j^c$ is a set of consecutive nodes, and all neighbor lists $N_v$ are sorted, all nodes $u \in N_v\cap V_j^c$ reside in $N_v$ in consecutive positions.} This property enables each $P_i$ to track messages by only recording the last processor (say, \textit{LastProc}) it has sent $N_v$ to. When $P_i$ encounters $u\in N_v$ such that $u\in V_j^c$, it checks \textit{LastProc}. If $\textit{LastProc} \neq P_j$, then $P_i$ sends $N_v$ to $P_j$ and set $\textit{LastProc} = P_j$. Otherwise, the node $u$ is ignored, meaning it would be redundant to send $N_v$. Resetting a single variable $\textit{LastProc}$ has a overhead of $O(|V_i^c|)$ as opposed to $O(|V_i^c|.p)$. 

Thus surrogate approach detects and eliminates message redundancy and allows multiple computation from a single message, without even compromising execution or space efficiency. The efficiency gained from this capability is shown experimentally in Section \ref{sec:perf}.

\subsection{Pseudocode for Counting Triangles.}

We denote a message by $\left\langle t, X \right\rangle $ where $t \in \{data, control\}$ is the type and $X$ is the actual data associated with the message. For a data message ($t=data$), $X$ refers to a neighbor list $N_x$ whereas for a control ($t=control$), $X=\emptyset$. The pesudocode for counting triangles for an incoming data message $\left\langle data, X \right\rangle$ is given in Fig. \ref{algo:space-tc-msg}.

\begin{figure}[!ht]
\begin{center}
\fbox{
\begin{minipage}[c] {0.85\linewidth}
\begin{algorithmic}[1]
	\STATE {Procedure \textsc{SurrogateCount}$(X,i):$}
	\STATE {$T \leftarrow 0$}\hspace{0.1in} //$T$ is the count of triangles
	\FOR {all $u\in X$ such that $u \in V_i^c$ }
			\STATE $S \leftarrow N_u \cap X$
			\STATE $T \leftarrow T+|S|$ 
	\ENDFOR
	\RETURN $T$
\end{algorithmic}
\end{minipage}
}
\end{center}
\vspace{-0.1in}
\caption{The procedure executed by $P_i$ after receiving message $\left\langle data, X \right\rangle $ from some $P_j$.}
\label{algo:space-tc-msg}
\vspace{-0.1in}
\end{figure} 

Once a processor $P_i$ completes the computation on all $v\in V_i^c$, it broadcasts a completion message $\left\langle control, \emptyset \right\rangle $. However, it cannot terminate execution until it receives $\left\langle control, \emptyset \right\rangle $ from all other processors since other processors might send data messages for surrogate computation. Finally, $P_0$ sums up counts from all processors using MPI aggregation function. The complete pseudocode of our algorithm using surrogate approach is presented in Fig. \ref{algo:space-tc}.

\begin{figure}[!ht]
\begin{center}
\fbox{
\begin{minipage}[c] {0.95\linewidth}
\begin{algorithmic}[1]
\STATE {$T_i \leftarrow 0$}\hspace{0.1in} //$T_i$ is $P_i$'s count of triangles
\FOR {each $v \in V_i^c$}
    \FOR {$u \in N_v$}
	    \IF { $u \in V_i^c$}
			\STATE $S \leftarrow N_v \cap N_u$
			\STATE $T_i \leftarrow T_i+|S|$ 
		\ELSE
			\STATE {\textbf{Send} $\left\langle data, N_v \right\rangle $ to $P_j$, where $u \in V_j$, if not sent already}	
		\ENDIF
	\ENDFOR
	\STATE {}
	\FOR {each incoming message $\left\langle t, X \right\rangle $}
	\IF { $t=data $}
	\STATE {$T_i \leftarrow T_i +$ \textsc{SurrogateCount}$(X,i)$} // See Figure \ref{algo:space-tc}
	\ELSE 
	\STATE {\textbf{Increment} completion counter}	
	\ENDIF
	\ENDFOR	
\ENDFOR
\STATE{}
\STATE{\textbf{Broadcast} $\left\langle control, \emptyset \right\rangle $}
\WHILE{completion counter $<$ p-1} 
	\FOR {each incoming message $\left\langle t, X \right\rangle $}
	\IF { $t=data $}
	\STATE {$T_i \leftarrow T_i +$ \textsc{SurrogateCount}$(X,i)$} // See Figure \ref{algo:space-tc}
	\ELSE 
	\STATE {\textbf{Increment} completion counter}	
	\ENDIF
	\ENDFOR
\ENDWHILE
\STATE{}
\STATE \textsc{MpiBarrier}
\STATE \textbf{Find Sum} $T \leftarrow \sum_i{T_i}$ using $\textsc{MpiReduce}$

\end{algorithmic}
\end{minipage}
}
\end{center}
\fsc
\caption{An algorithm for counting triangles using surrogate approach. 
Each processor $P_i$ executes Line 1-22. After that, they are synchronized, and the aggregation is performed (Line 24-25).}
\label{algo:space-tc}
\afsc
\end{figure}

 \subsection{Partitioning and Load Balancing}
 \label{sec:space-part}
 
 While constructing partitions $i$, set of nodes $V$ is partitioned into $p$ disjoint subsets $V_i^c$ of consecutive nodes. 
 Ideally, the set $V$ should be partitioned in such a way that the cost for counting triangles is almost equal for all processors. 
 Similar to our fast parallel algorithm presented in Section \ref{sec:patric}, we need to compute $p$ disjoint partitions of $V$ such that for each partition $V_i^c$, 
 \begin{eqnarray} \label{eqn:part}
 \sum_{v \in V_i^c}{f(v)} \approx \frac{1}{p}\sum_{v \in V}{f(v)}.
 \end{eqnarray}

 Several estimations for $f(v)$ were proposed in Section \ref{sec:patric} among which $f(v) = \sum_{ u\in N_v}{(\hat{d_v} + \hat{d_u})}$ was shown experimentally as the best. Since our algorithm employs a different communication scheme for counting triangles, none of those estimations corresponds to the cost of this algorithm. Thus, we derive a new cost function $f(v)$ to estimate the computational cost of our algorithm more precisely. 

 \textbf{Deriving An Estimation for Cost Function $f(v)$.}
  
 We want to find $f(v)$ such that $\sum_{v\in V_i^c} f(v)$ gives a good estimation of the computation cost incurred on processor $P_i$. We derive $f(v)$ as follows.

 Recall that $\mathcal{N}_v = \{u: (u,v)\in E \}$ and $N_v = \{u: (u,v)\in E, v \prec u \}$. Then, it is easy to see that 
 \begin{eqnarray} \label{eqn:nv}
 u \in \mathcal{N}_v-N_v \Leftrightarrow v \in N_u.	
 \end{eqnarray}
 Now, $P_i$ performs two types of computations due to all $v \in V_i^c$ as follows.
 
 \begin{itemize}[leftmargin=*] 
 	\item[1.] \textit{Surrogate or delegated computation:} 
 	$P_i$ compute $N_v \cap N_u$ for all $v \in N_u$ and $u \in V_j^c$, $i \ne j$, i.e., $u\in (\mathcal{N}_v-N_v) \cap (V-V_i^c)$. The cost incurred on $P_i$ for such $u$ and $v$ is given by
 	\begin{eqnarray*} \label{eqn:surr_comp}
 		\Theta\left(\sum_{v\in V_i^c}\sum_{ u\in (\mathcal{N}_v-N_v) \cap (V-V_i^c)}{(\hat{d_v} + \hat{d_u})}\right).
 	\end{eqnarray*}
 	
 	\item[2.] \textit{Local computation:} 
 	$P_i$ compute $N_v \cap N_u$ for all $u \in N_v \cap V_i^c$. Let $E_i^c$ be the set of edges $(u,v)$ where both $u$ and $v$ are in $V_i^c$, i.e., $E_i^c=\{ (u,v) \in E | u,v \in V_i^c\}$. Now, the	cost incurred on $P_i$ for local computations is given by
 	\begin{eqnarray*} \label{eqn:surr_comp}
 		\Theta\left(\sum_{v\in V_i^c}\sum_{ u\in N_v \cap V_i^c}{(\hat{d_v} + \hat{d_u})}\right) 
 		&=&\Theta\left(\sum_{(u,v)\in E_i^c}{(\hat{d_v} + \hat{d_u})}\right) \\
 		&=& \Theta\left(\sum_{v\in V_i^c}\sum_{ u \in (\mathcal{N}_v-N_v) \cap V_i^c}{(\hat{d_v} + \hat{d_u})}\right).
 	\end{eqnarray*}
 \end{itemize}
 By adding costs from $(1)$ and $(2)$ above, we get the computation cost, 
 \vspace{-0.1in}
 \begin{eqnarray*}
 	\Theta \left(\sum_{v\in V_i^c}\sum_{ u\in \mathcal{N}_v-N_v}{(\hat{d_v} + \hat{d_u})}\right). 	
 \end{eqnarray*}
 
 Now, if we assign $f(v)=\left(\sum_{ u\in \mathcal{N}_v-N_v}{(\hat{d_v} + \hat{d_u})}\right)$, the computation cost incurred on $P_i$ becomes $\sum_{v\in V_i^c} f(v)$. Thus, we use the following cost function: 
 \begin{eqnarray*}
 	f(v)=\left(\sum_{ u\in \mathcal{N}_v-N_v}{(\hat{d_v} + \hat{d_u})}\right).
 \end{eqnarray*}

 \textbf{Parallel Computation of the Cost Function $f(v)$.} 
 In parallel, each processor $P_i$ computes $f(v)$ for all $v\in C_i$. Recall that $C_i$ is the set of all nodes in the $i$-th chunk, as discussed in Section \ref{sec:load}. Function $f(v) = \left(\sum_{ u\in \mathcal{N}_v-N_v}{(\hat{d_v} + \hat{d_u})}\right)$ is computed as follows. 
 \begin{itemize}
 	\isc
 	\item[i.] First $P_i$ computes $\hat{d_v}$, $v\in C_i$: computing $\hat{d_v}$ requires $d_u$ for all $u \in \mathcal{N}_v$. Let $u\in C_j$. Then, $P_i$ sends a request message to $P_j$, and $P_j$ replies with a message containing $d_u$. 
 	\item[ii.] Then $P_i$ finds $\hat{d_u}$ for all $u \in \mathcal{N}_v-N_v$: let $u\in C_j$. $P_i$ sends a request message to $P_j$, and $P_j$ replies with a message containing $\hat{d_u}$.
 	\item[iii.] Now, $f(v) = \sum_{ u\in \mathcal{N}_v-N_v}{(\hat{d_v} + \hat{d_u})}$ is computed using $\hat{d_v}$ and $\hat{d_u}$ obtained in step $(i)$ and $(ii)$.  
 \end{itemize}
 
 \textbf{Computing Balanced Partitions.}
 Once $f(v)$ is computed for all $v\in V$, we compute $V_i^c$ using the same algorithm we used for overlapping partitioning as described in Section \ref{sec:patric}. 

\subsection{Correctness of The Algorithm}
\label{sec:correct_spacealgo}

The correctness of our space efficient parallel algorithm is formally presented in the following theorem.

\begin{theorem} \label{thm:correct_spacealgo}
Given a graph $G=(V, E)$, our space efficient parallel algorithm counts every triangle in G exactly once.
\end{theorem}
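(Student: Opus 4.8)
The plan is to reduce correctness to two claims: (a) every triangle of $G$ is ``assigned'' to exactly one processor, in the sense that there is a unique $P_i$ responsible for detecting it, and (b) each responsible processor actually detects its assigned triangles exactly once. For (a), I would fix a triangle on nodes $x_1 \prec x_2 \prec x_3$ (using the degree-based ordering, with ties by ID, as in Equation~\ref{eqn:order2}) and observe that by construction of the non-overlapping partition, $x_1$ is a core node of exactly one partition, say $V_i^c$, because the sets $V_k^c$ are disjoint and cover $V$. Since $x_1 \prec x_2$ and $x_1 \prec x_3$, we have $x_2, x_3 \in N_{x_1}$, and similarly $x_3 \in N_{x_2}$; this is exactly the situation exploited in the proof of Theorem~\ref{thm:correct} for \emph{NodeIteratorN}. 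So the triangle will be counted precisely when the algorithm processes $v = x_1$ (on processor $P_i$) together with $u = x_2 \in N_{x_1}$, by including $x_3$ in the intersection $N_{x_1} \cap N_{x_2}$. I would then argue it cannot be counted under any other $(v,u)$ pair on any processor, because the only candidate pairs $(v,u)$ with $v \in V_k^c$, $u \in N_v$, and the third node appearing in $N_v \cap N_u$ force $v$ to be the $\prec$-minimum of the triangle and $u$ to be the $\prec$-median --- exactly the argument of Theorem~\ref{thm:correct}, now just spread across processors rather than iterations of one loop.

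**The communication layer.** The substantive new content over Theorem~\ref{thm:correct} is verifying that the surrogate messaging scheme neither loses nor duplicates the count for the pair $(v,u) = (x_1, x_2)$ when $x_1$ and $x_2$ lie in different partitions. I would split into the two cases the code in Figure~\ref{algo:space-tc} distinguishes. If $x_2 \in V_i^c$ as well, then $P_i$ computes $N_{x_1} \cap N_{x_2}$ locally (Lines 4--6) and the analysis is identical to the sequential case. If $x_2 \in V_j^c$ with $j \neq i$, then $P_i$ reaches the \texttt{else} branch (Line 8) and sends $\langle data, N_{x_1}\rangle$ to $P_j$ --- and crucially, exactly once, by the \textit{LastProc} tracking argument: since $V_j^c$ is a block of consecutive node IDs and $N_{x_1}$ is sorted, all neighbors of $x_1$ landing in $V_j^c$ are contiguous in $N_{x_1}$, so the first such neighbor triggers the send and the rest are suppressed without omission. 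On receipt, $P_j$ runs \textsc{SurrogateCount}$(N_{x_1}, j)$ (Figure~\ref{algo:space-tc-msg}), which loops over all $u \in N_{x_1}$ with $u \in V_j^c$ --- in particular $u = x_2$ --- and adds $|N_{x_2} \cap N_{x_1}|$, which includes $x_3$. Thus the triangle is counted once and only once, and the message carrying it is delivered and processed exactly once.

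**Termination and aggregation.** I would then close the loop by arguing that no processor terminates prematurely and no counts are dropped in the final sum. Each $P_i$ broadcasts $\langle control, \emptyset\rangle$ only after finishing its own core nodes (Line 20), and the \texttt{while} loop (Lines 21--28) keeps servicing incoming data messages until it has received completion signals from all other $p-1$ processors; since every data message that $P_i$ ever sends is sent before $P_i$'s own broadcast, and the network delivers all messages, every surrogate message is received and processed before the sender could have caused a premature shutdown elsewhere. Hence all $T_i$ are complete when the \textsc{MpiBarrier} is reached, and $T = \sum_i T_i$ via \textsc{MpiReduce} (Line 25) is the exact triangle count. The main obstacle, and the step I would spend the most care on, is the no-loss-no-duplication argument for cross-partition pairs under the \textit{LastProc} optimization --- specifically making rigorous that the contiguity of $N_v \cap V_j^c$ inside the sorted list $N_v$ guarantees that a single \textit{LastProc} variable suffices to suppress exactly the redundant sends and nothing more; the rest is a straightforward lift of Theorem~\ref{thm:correct} to the distributed setting plus a routine message-delivery/termination argument.
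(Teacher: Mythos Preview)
Your proposal is correct and follows essentially the same approach as the paper: the same case split on whether the $\prec$-minimum $x_1$ and the $\prec$-median $x_2$ of the triangle lie in the same core set $V_i^c$, and the same appeal to $x_1 \notin N_{x_2}$ and $x_1, x_2 \notin N_{x_3}$ for uniqueness. You are more thorough than the paper on the \textit{LastProc} no-duplication argument and on the termination/aggregation layer, both of which the paper's own proof simply takes for granted.
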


\noindent \textit{Proof.} Consider a triangle $(x_1, x_2, x_3)$ in $G$, and without the loss of generality, assume that $x_1 \prec x_2 \prec x_3$. By the constructions of $N_x$ (Line 2-4 in Fig. \ref{algo:nodeiteratorn}), we have $x_2, x_3 \in {N_{x_1}}$ and $x_3 \in N_{x_2}$. Now, there are two cases: 

\begin{itemize}[leftmargin=*] 
\item \textit{case 1.} $x_1, x_2 \in V_i^c$:
Nodes $x_1$ and $x_2$ are in the same partition $i$. Processor $P_i$ executes the loop in Line 2-6 (Fig. \ref{algo:space-tc}) with $v = x_1$ and $u = x_2$, and node $x_3$ appears in $S = N_{x_1} \cap N_{x_2}$, and the triangle $(x_1, x_2, x_3)$ is counted once. But this triangle cannot be counted for any other values of $v$ and $u$ because $x_1 \notin N_{x_2}$ and $x_1,x_2 \notin N_{x_3}$.

\item \textit{case 2.} $x_1 \in V_i^c, x_2 \in V_j^c, i \neq j$:
Nodes $x_1$ and $x_2$ are in two different partitions $i$ and $j$, respectively. $P_i$ attempts to count the triangle executing the loop in Line 2-6 with $v = x_1$ and $u = x_2$. However, since $x_2 \notin V_i^c$, $P_i$ sends $N_{x_1}$ to $P_j$ (Line 8). $P_j$ counts this triangle while executing the loop in Line 10-12 with $X = N_{x_1}$, and node $x_3$ appears in $S = N_{x_2} \cap N_{x_1}$ (Line 4 in Fig. \ref{algo:space-tc-msg}). This triangle can never be counted again in any processor, since  $x_1 \notin N_{x_2}$ and $x_1,x_2 \notin N_{x_3}$.
\end{itemize}
Thus, each triangle in $G$ is counted once and only once. $\square$

\subsection{Analysis of the Number of Messages}
\label{sec:msg_spacealgo}
For $v\in V_i^c$, we call $(v,u) \in E$ a \textit{cut edge} if $u \in V_j^c$, $j\neq i$. Let $\ell_{vj}$ is the number of cut edges emanating from node $v$ to all nodes $u$ in partition $j$ with $v\prec u$. Now, in \textit{Surrogate} approach, for all such cut edges $(v,u)$, processor $P_i$ sends $N_v$ to $P_j$ at most once instead of $\ell_{vj}$ times. This leads to a saving of the number of messages by a factor of $\ell_{vj}$ for each $v\in V_i^c$. To get a crude estimate of how the number of messages for \textit{direct} and \textit{surrogate} approaches compare, let $\ell$ be the number of cut edges $\ell_{vj}$ averaged over all $v \in V_i^c$ and partitions $j$. Then, the number of messages exchanged in \textit{direct} approach is roughly $\ell$ larger than \textit{surrogate} approach. 

As shown experimentally in Table \ref{table:msg-comp}, \textit{direct} approach exchanges messages that is $4$ to $12$ times larger than that of \textit{surrogate} approach. Thus,  \textit{surrogate} approach reduces approx. $70\%$ to $90\%$ of messages leading to faster computations as shown in Table \ref{table:op_nop} of the following section.

\begin{table}[!htb]
	\tbl{Number of messages exchanged in Direct and Surrogate approaches.
	\label{table:msg-comp}}{
	\begin{tabular}{|l|l|l|l|} \hline
		\multirow{2}{*}{ \textbf{Networks}} &  \multicolumn{2}{|c|}{\textbf{\# of Messages}} & \multirow{2}{*}{\textbf{$Ratio$}}  \\
		\cline{2-3}
		& Direct & Surrogate &   \\ \hline
		Miami         &    $16,321,478$     & $3,987,871$ & $4.09$     \\ \hline 
		web-Google   &    $493,488$    & $99,221$  & $4.97$   \\ \hline 
		LiveJournal   &    $23,138,824$    & $4,002,575$  &  $5.78$   \\ \hline 
		Twitter   &   $247,821,246$    & $25,341,984$  & $9.78$   \\ \hline 
		PA(10M, 100)         &   $99,436,823$   & $8,092,340$ & $12.29$  \\ \hline 
	\end{tabular}}
\end{table}

\subsection{Experimental Evaluation}
\label{sec:perf}
We presented the experimental evaluation of our algorithm with overlapping partitioning in Section \ref{sec:patric_perf}. In this section, we present the performance of our parallel algorithm with non-overlapping partitioning and compare it with other related algorithms. We will denote our algorithm with overlapping partitioning as \textbf{AOP} and the algorithm with non-overlapping partitioning as \textbf{ANOP} for the convenience of discussion.


\textbf{Comparison with Previous Algorithms.}
Algorithm \textit{AOP} does not require message passing for counting triangles leading to a very fast algorithm (Table \ref{table:op_nop}). In the contrary, \textit{ANOP} achieves huge space saving over \textit{AOP} (Table \ref{table:mem-comp}), although \textit{ANOP} requires message passing for counting triangles.  Our proposed communication approach (surrogate) reduces number of messages quite significantly leading to an almost similar runtime efficiency to that of \textit{AOP}. In fact, \textit{ANOP} loses only $\sim$20\% runtime efficiency for the gain of a significant space efficiency of up to 25 times, thus allowing to work on larger networks.

A runtime comparison among other related algorithms \cite{SURI,park-cikm,park-cikm2} for counting triangles in Twitter network is given in Fig. \ref{fig:twitter_comp}. Our algorithm \textit{ANOP} is $35$, $17$, and $7$ times faster than that of \cite{SURI}, \cite{park-cikm}, and \cite{park-cikm2}, respectively. Further, \textit{ANOP} is almost as fast as \textit{AOP}.

\begin{table}[!hbt]
	\tbl{Runtime performance of our algorithms \textit{AOP} and \textit{ANOP}. We used 200 processors for this experiment. We showed both direct and surrogate approaches for \textit{ANOP}.
	\label{table:op_nop}}{
	\begin{tabular}{|l|l|l|l|l|} \hline
		\multirow{2}{*}{ \textbf{Networks}} &  \multicolumn{3}{|c|}{\textbf{Runtime }} & \multirow{2}{*}{\textbf{Triangles}} \\ 
		\cline{2-4}
		& \textbf{AOP}  & \textbf{Direct} & \textbf{Surrogate}& \\ \hline
		web-BerkStan  &    $0.10$s  & $0.8$s  & $0.14$s & $65$M    \\ \hline 
		Miami         &     $0.6$s    & $3.85$s & $0.79$s & $332$M    \\ \hline 
		LiveJournal   &   $0.8$s     & $5.12$s  & $1.24$s &$286$M   \\ \hline 
		Twitter   &   $9.4$m    &  $35.49$m & $12.33$m  &$34.8$B  \\ \hline 
		PA(1B, 20)         &     $15.5$m    & $78.96$m & $20.77$m &$0.403$M  \\ \hline 
	\end{tabular}}
\end{table}

\begin{figure*}[!tbh]
	\hfill
	\begin{minipage}[t]{.32\textwidth}
		\begin{center}
			\centerline{\includegraphics[width=1.0\textwidth]{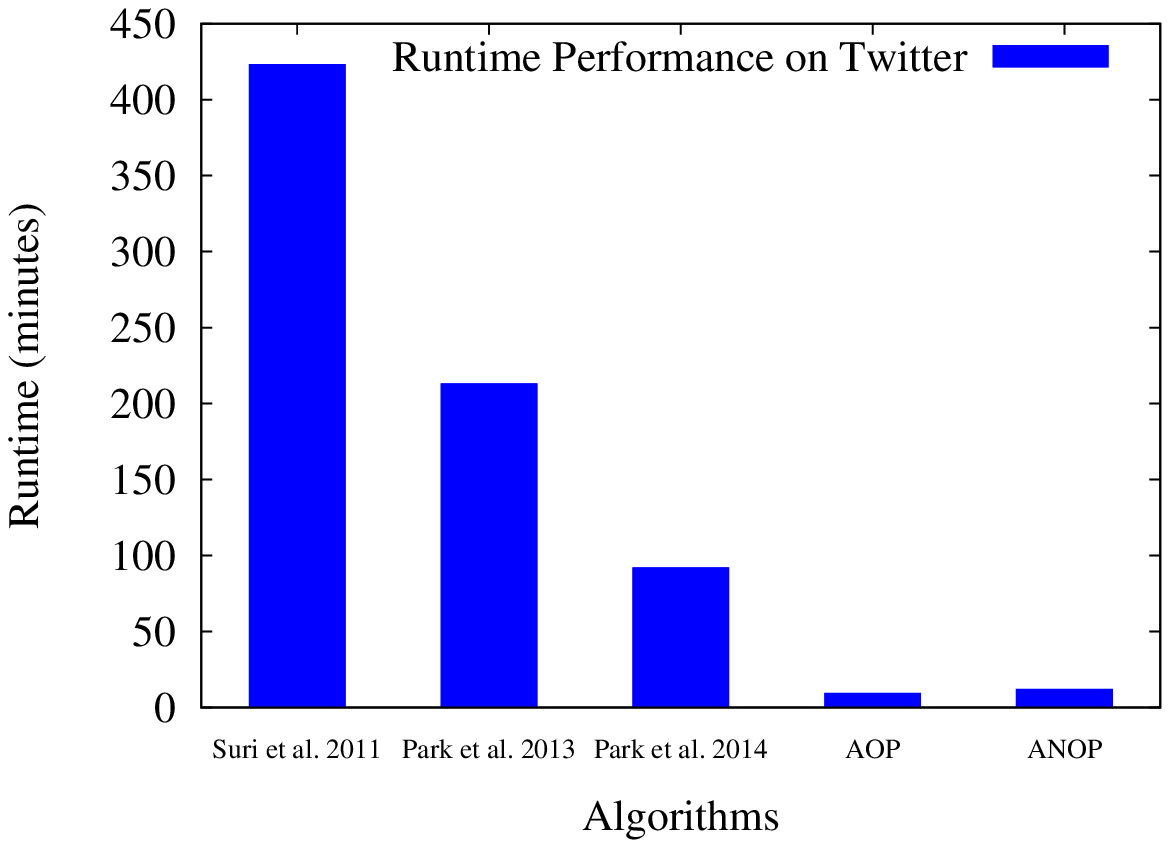}}
			\fsc
			\caption{Runtime reported by various algorithms for counting triangles in Twitter network.}
			\label{fig:twitter_comp}
			\afsc
		\end{center}
	\end{minipage}
	\hfill
	\begin{minipage}[t]{.32\textwidth}
		\begin{center}
			\centerline{\includegraphics[width=1.0\textwidth]{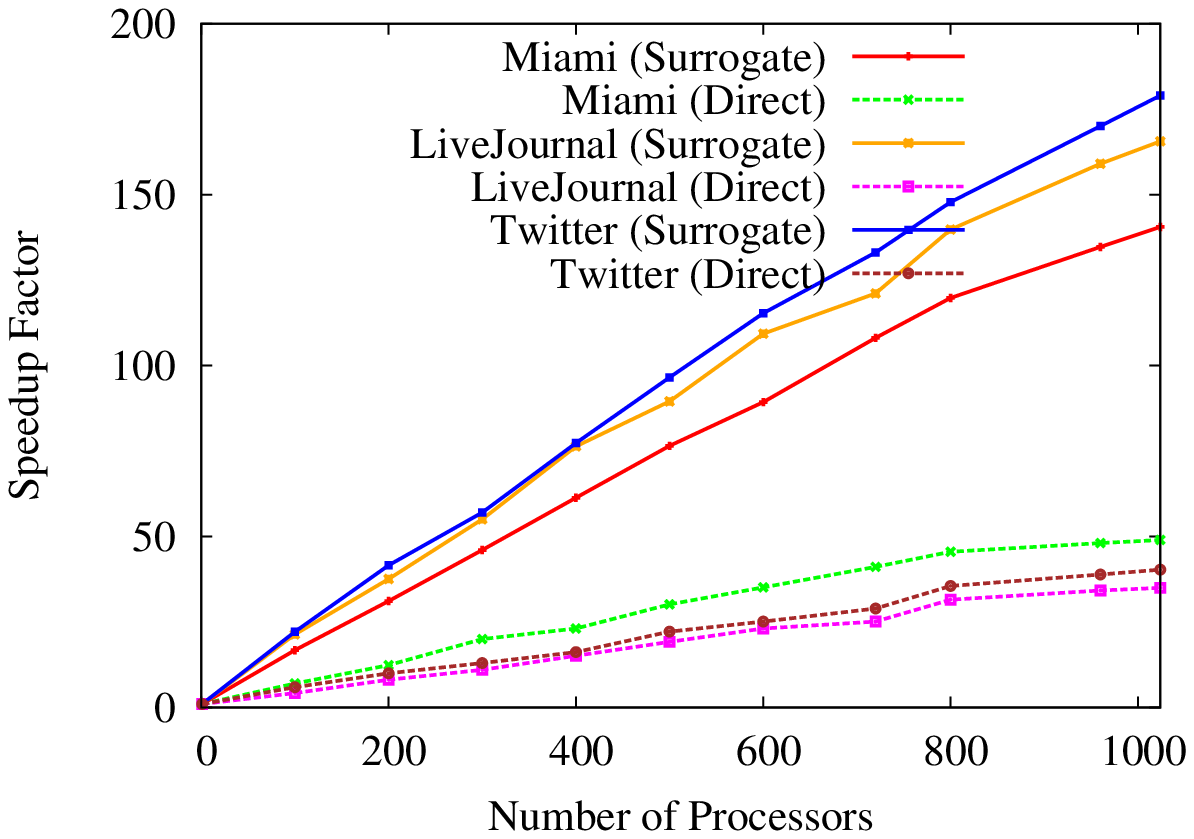}}
			\fsc
			\caption{Speedup factors of our algorithm with both direct and surrogate approaches.}
			\label{fig:strong-space}
			\afsc
		\end{center}
	\end{minipage}
	\hfill
	\begin{minipage}[t]{.32\textwidth}
		\begin{center}
			\centerline{\includegraphics[width=1.0\textwidth]{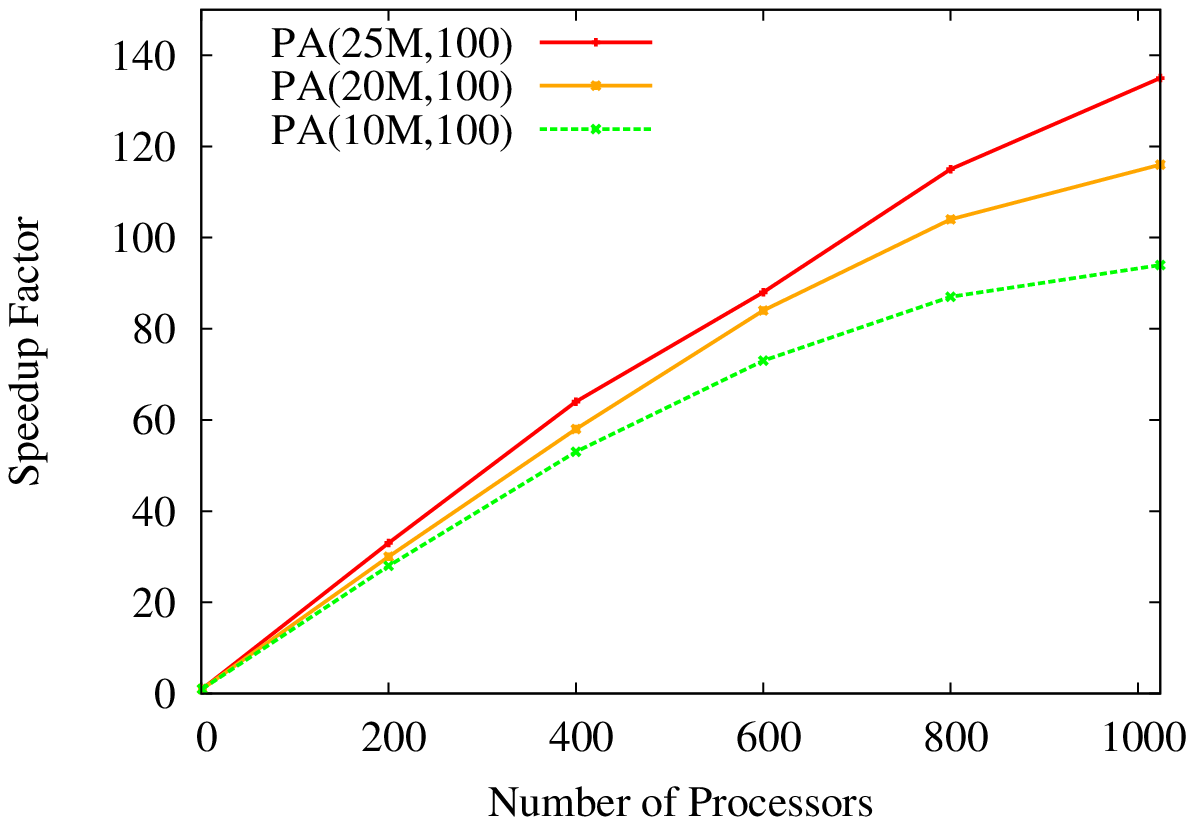}}
			\fsc
			\caption{Improved scalability of our algorithm with increasing network size.}
			\label{fig:space_pscale}
			\afsc
		\end{center}
	\end{minipage}
	\hfill
	\vspace{-5pt}
\end{figure*}

\begin{figure*}[!tbh]
	\hfill
	\begin{minipage}[t]{.32\textwidth}
		\begin{center}
			\centerline{\includegraphics[width=1.0\textwidth]{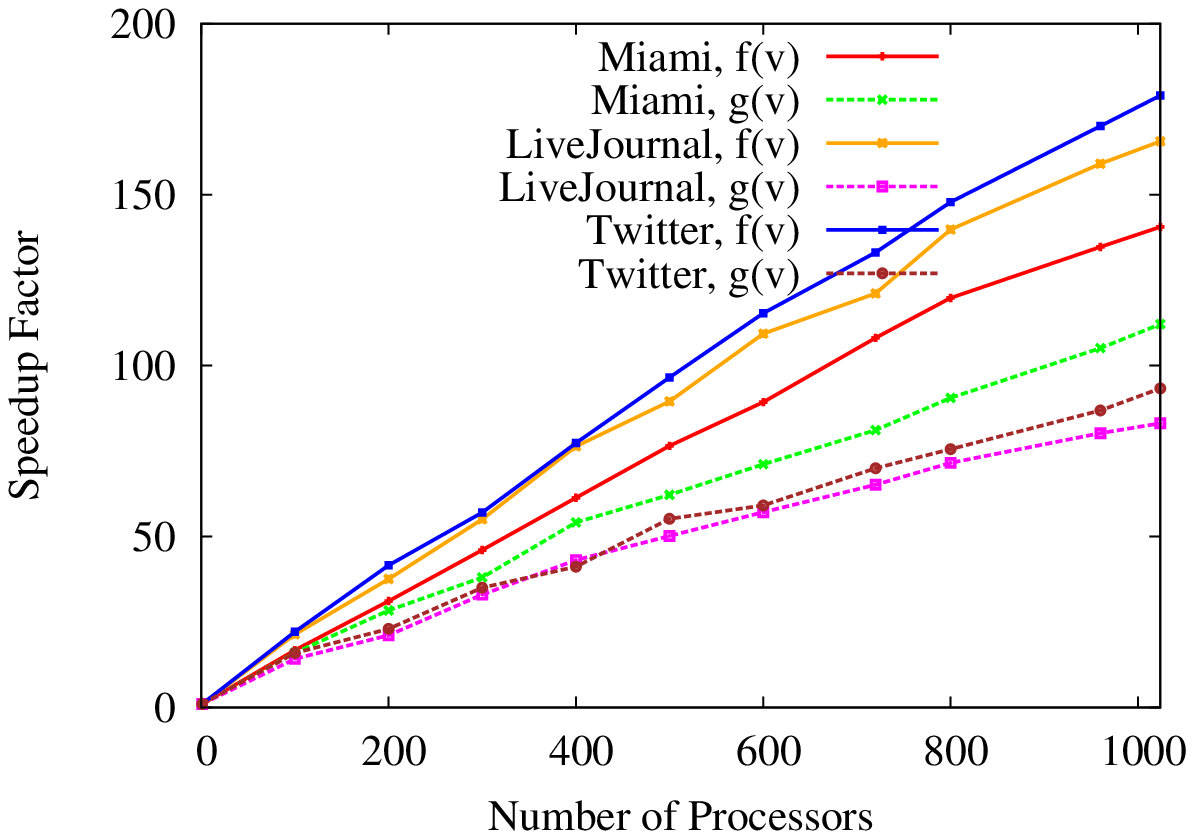}}
			\fsc
			\caption{Comparison of the cost function $f(v)$ estimated for our algorithm with non-overlapping partitioning and the best function $g(v)$ in Section \ref{sec:patric}.}
			\label{fig:new_fv}
			\afsc
		\end{center}
	\end{minipage}
	\hfill
	\begin{minipage}[t]{.32\textwidth}
		\begin{center}
			\centerline{\includegraphics[width=1.0\textwidth]{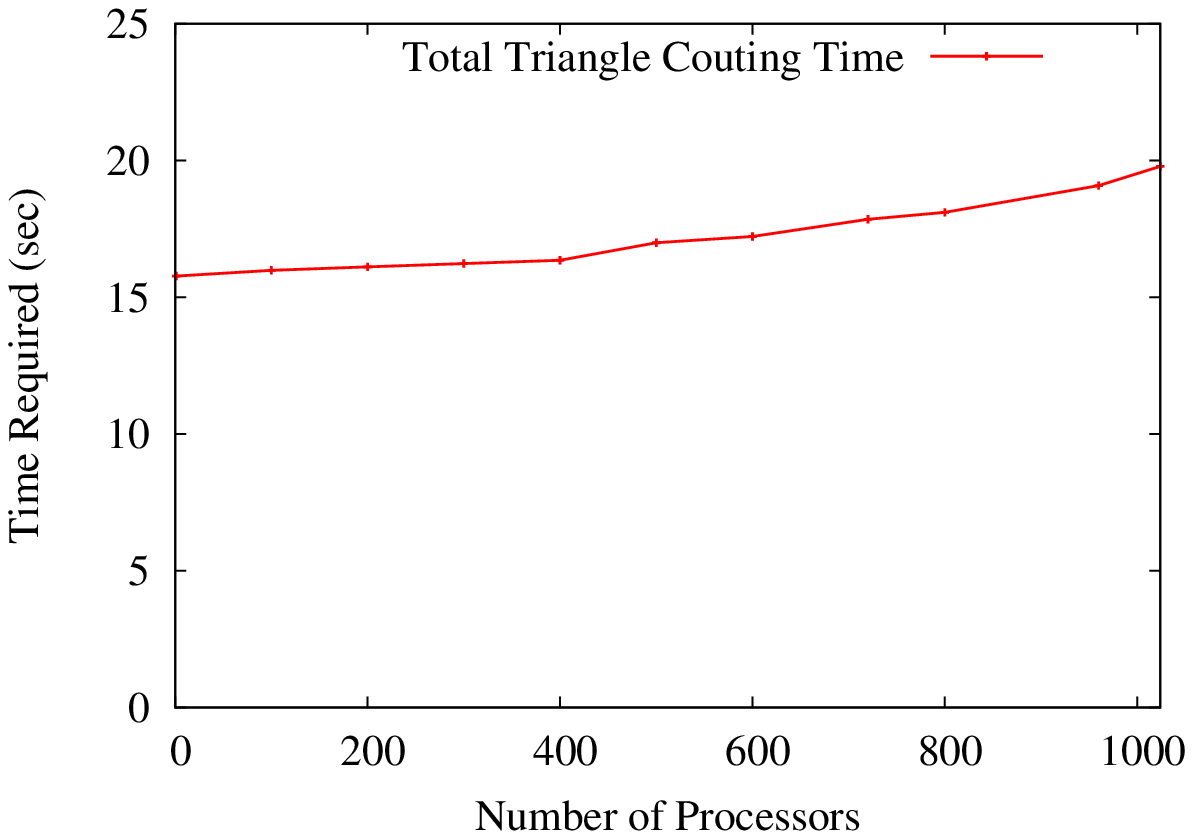}}
			\fsc
			\caption{Weak scaling of our algorithm, experiment performed on PA($t/10*1M, 50$) networks, $t=$ number of processors used.}
			\label{fig:weak-space}
			\afsc
		\end{center}
	\end{minipage}
	\hfill
	\begin{minipage}[t]{.32\textwidth}
		\begin{center}
			\centering
			\scalebox{0.65}{\input{edge_tr.pstex_t}}
			\fsc
			\caption{Two triangles $(v,u,w)$ and $(v',u,w)$ with an overlapping edge $(u,v)$.}
			\label{fig:edge_triangle}
			\afsc
		\end{center}
	\end{minipage}
	
		\hfill
	\vspace{-5pt}
\end{figure*}

\textbf{Strong Scaling.} 
Fig. \ref{fig:strong-space} shows strong scaling (speedup) of our algorithm \textit{ANOP} on Miami, LiveJournal, and web-BerkStan networks with both direct and surrogate approaches. Speedup factors with the surrogate approach are significantly higher than that of the direct approach due to its capability to reduce communication cost drastically. Our algorithm demonstrates an almost linear speedup to a large number of processors.

Further, \textit{ANOP} scales to a higher number of processors when networks grow larger, as shown in Fig. \ref{fig:space_pscale}. This is, in fact, a highly desirable behavior since we need a large number of processors when the network size is large and computation time is high.

\textbf{Effect of Estimations for f(v).} 
We show the performance of our algorithm \textit{ANOP} with the new cost function $f(v) = \sum_{ u\in \mathcal{N}_v-N_v}{(\hat{d_v} + \hat{d_u})}$ and the best function $g(v) = \sum_{ u\in N_v}{(\hat{d_v} + \hat{d_u})}$ computed for \textit{AOP}. As Fig. \ref{fig:new_fv} shows, \textit{ANOP} with $f(v)$ provides better speedup than that with $g(v)$. Function $f(v)$ estimates the computational cost more precisely for \textit{ANOP} with surrogate approach, which leads to improved load balancing and better speedup.

\textbf{Weak Scaling.} 
The weak scaling of our algorithm with non-overlapping partitioning is shown in Fig. \ref{fig:weak-space}. Since the addition of processors causes the overhead for exchanging messages to increase, the runtime of the algorithm increases slowly. However, as the change in runtime is rather slow (not drastic), our algorithm demonstrates a reasonably good weak scaling.


\section{Sparsification-based Parallel Approximation Algorithms}
\label{sec:sparse}

We discussed our parallel algorithms for counting the exact number of triangles in Section \ref{sec:patric} and \ref{sec:space}. In this section, we show how those algorithms can be combined with an edge sparsification technique to design a parallel approximation algorithm.  

Sparsification of a network is a sampling technique where some randomly chosen edges are retained and the rest are deleted, and then computation is performed on the sparsified network. Such technique saves both computation time and memory space and provides an approximate result. We integrate a sparsification technique, called DOULION, proposed in \cite{tsourakakis2009doulion} with our parallel algorithms. Our adapted version of DOULION provides more accuracy than DOULION when used with overlapping partitioning. The adaptation with non-overlapping partitioning provides the same accuracy as original DOULION. 

\subsection{Overview of the Sparsification}
Let $G(V,E)$ and $G'(V,E' \subset E)$  be the networks before and after sparsification, respectively. Network $G'(V,E')$ is obtained from $G(V,E)$ by retaining each edge,  independently, with probability $q$ and removing with probability $1-q$.  Now any algorithm can be used to find the exact number of triangles in $G'$. Let $T(G')$ be the number of triangles in $G'$. The estimated number of triangles in $G$ is given by $\frac{1}{q^3}T(G')$, which is an unbiased estimation. It is easy to see that the expected value of  $\frac{1}{q^3}T(G')$ is T(G), the number of triangles in the original network $G$: let the triangles in $G$ be arbitrarily numbered as $1,2, \ldots , T(G)$, and $x_i$ be an indicator random variable that takes value 1 if triangle $i$ of $G$ survives in  $G'$. A triangle survives if all of its three edges are retained in $G'$. Then we have $\Pr\{x_i=1\} = \frac{1}{q^3}$ and, by the linearity of expectation,
\vspace{-.1in}
\begin{equation*} \label{eqn:exp}
{\small
E\left[\frac{1}{q^3}T(G')\right] = \frac{1}{q^3}\sum_{i=1}^{T(G)}{E[x_i]} = \frac{1}{q^3}\sum_{i=1}^{T(G)}{\Pr\{x_i = 1\}} = T(G).
}
\end{equation*}

As shown in \cite{tsourakakis2009doulion}, the variance of the estimated number of triangles is
\begin{equation} \label{eqn:var1}
\mbox{ Var } = \left(\frac{1}{q^3} - 1\right) T(G) + 2k\left(\frac{1}{q} - 1\right),
\end{equation}
where $k$ is the number of pairs of triangles in $G$ with an overlapping edge (see Figure \ref{fig:edge_triangle}).

\subsection{Parallel Sparsification Algorithm}
In our parallel algorithm, sparsification is done as follows: each processor $P_i$ independently performs sparsification on its partition $G_i(V'_{i},E'_{i})$, where $V'_{i}=V_i$, $E'_{i}=E_i$ for AOP and $V'_{i}=V_i^c$, $E'_{i}=E_i^c$ for ANOP. While loading the partition $G_i$ into its local memory, $P_i$ retains each edge $(u,v) \in E'_i$ with probability $q$ and discards it with probability $1-q$ as shown Figure \ref{algo:sparsification}. 

\begin{figure}[!h]
\begin{center}
\fbox{
\begin{minipage}[c] {0.95\linewidth}
\begin{algorithmic}[1]
\FOR {$v \in V'_{i}$} 
	\FOR {$(v,u) \in E'_{i}$}
		\IF { $v \prec u $}
				\STATE store $u$ in $N_v$ with probability $q$
		\ENDIF
	\ENDFOR
\ENDFOR
\STATE $T_i\leftarrow$ count of triangles on $G_i$ //using alg. in Sec. \ref{sec:patric} or \ref{sec:space}
\STATE Find Sum $T'=\sum_i{T_i}$ using $\textsc{MpiReduce}$
\STATE $T \leftarrow \frac{1}{q^3} \times T'$
\end{algorithmic}
\end{minipage}
}
\end{center}
\fsc
\caption{Counting approximate number of triangles with parallel sparsification algorithm.}
\label{algo:sparsification}
\end{figure}

 
Now, our parallel sparsification algorithm with overlapping partitioning is not exactly the same as that of DOULION. Consider two triangles $(v,u,w)$ and $(v',u,w)$ with an overlapping edge $(u,w)$ as shown in Fig. \ref{fig:edge_triangle}. In DOULION, if edge $(u,w)$ is not retained, none of the two triangles survive, and as a result, survivals of $(v,u,w)$ and $(v',u,w)$ are not independent events. Now, in our case, if $v$ and $v'$ are core nodes in two different partitions $G_i$ and $G_j$, processor $i$ may retain edge $(u,w)$ while processor $j$ discards $(u,w)$, and vice versa. As processor $i$ and $j$ perform sparsification independently, survivals of triangles $(v,u,w)$ and $(v',u,w)$ are independent events.

However, our estimation is also unbiased, and in fact, this difference (with DOULION) improves the accuracy of the estimation by our parallel algorithm. Since the probability of survival of any triangle is still exactly $\frac{1}{q^3}$, we have $E\left[\frac{1}{q^3}T'\right] = T$. To calculate variance of the estimation, let $k'_i$ be the number of pairs of triangles with an overlapping edge such that both triangles are in partition $G_i$, and $k' = \sum_i{k'_i}$. Let $k''$ be the number of pairs of triangles $(v,u,w)$ and $(v',u,w)$ with an overlapping edge $(u,w)$ (as shown in Fig. \ref{fig:edge_triangle}) and $v$ and $v'$ are core nodes in two different partitions. Then clearly, $k' + k'' = k$ and $k' \le k$. Now following the same steps as in \cite{tsourakakis2009doulion}, one can show that the variance of our estimation is
\begin{equation} \label{eqn:var2}
\mbox{ Var}' = \left(\frac{1}{q^3} - 1\right) T(G) + 2k'\left(\frac{1}{q} - 1\right).
\end{equation}

Comparing Eqn. \ref{eqn:var1} and \ref{eqn:var2}, if $k'' > 0$, we have $k' < k$ and reduced variance leading to improved accuracy. We verify this observation by the experimental results on one realistic synthetic and three real-world networks in Table \ref{vs_doulion}. For all networks, our parallel sparsification algorithm with overlapping partitioning results in smaller variance and errors than that of DOULION. 

However, the accuracy does not improve for parallel sparsification with non-overlapping partitioning. Since the partitioning is non-overlapping, the effect of parallel sparsification is the same as that of the sequential sparsification. As a result, our parallel sparsification algorithm with non-overlapping partition has effectively the same accuracy as that of DOULION, as evident in Table \ref{vs_doulion}.

\begin{table*}
		\tbl{Accuracy of our parallel sparsification algorithm and DOULION \cite{tsourakakis2009doulion} with $q = 0.1$. Our parallel algorithm was run with 100 processors. Variance, max error and average error are calculated from 25 independent runs for each of the algorithms. The best values for each attribute are marked as bold.
		\label{vs_doulion}}{
		\begin{tabular}{|l|l|l|l|l|l|l|l|l|l|} \hline
			\multirow{2}{*}{\textbf{Networks}} & \multicolumn{3}{|c|}{\textbf{Variance}} & \multicolumn{3}{|c|}{\textbf{Avg. error (\%)}} & \multicolumn{3}{|c|}{\textbf{Max error (\%)}}  \\
			\cline{2-10}
			& AOP & ANOP & DOULION & AOP & ANOP & DOULION & AOP & ANOP & DOULION   \\ \hline
			web-BerkStan &  \textbf{1.287} &  1.991  &  2.027   &        \textbf{0.389}  &  0.391    &  0.392     &     \textbf{1.024} & 1.082  &  1.082   \\ \hline
			LiveJournal  &  \textbf{1.770} &   1.952    &  1.958   &       \textbf{1.463}  &  1.857     &  1.862     &    \textbf{3.881} &  4.774   &  4.752    \\ \hline
			web-Google &  \textbf{1.411}  & 2.003   & 1.998    & \textbf{1.327 }       & 1.564     & 1.580      & \textbf{2.455}     & 3.923 & 3.942   \\ \hline
			Miami  & \textbf{1.675}  & 2.105    & 2.112   & \textbf{1.55}        &  1.921   &  1.905    & \textbf{3.45}   & 4.88   &  4.75   \\ \hline
		\end{tabular}}
\end{table*}

\begin{table*}
		\tbl{Comparison of accuracy between our parallel sparsification algorithms and DOULION on one realistic synthetic and three real-world networks with 100 processors. The best values for each $q$ are marked as bold.
		\label{table:sparse_accuracy}}{
		\begin{tabular}{| l | l | l | l | l | l | l |}
			\hline
			\textbf{Networks} & \textbf{Algorithms} &  \textbf{$q=0.1$} & \textbf{$q=0.2$} & \textbf{$q=0.3$} & \textbf{$q=0.4$} & \textbf{$q=0.5$} \\ \hline
			\multirow{3}{*}{\textbf{web-BerkStan}} & AOP   &  \textbf{99.9921}  &  \textbf{99.9927} & \textbf{99.9932} & \textbf{99.9947} & $\textbf{99.9979}$\\
			\cline{2-7}
			& ANOP   &  $99.6308$  &  $99.7490$ & $99.8392$ & $99.9168$ & $99.9565$\\
			\cline{2-7}
			& DOULION    &  $99.6309$  &  $99.7484$ & $99.8401$ & $99.9171$ & $99.9566$\\
			\hline
			\multirow{3}{*}{\textbf{LiveJournal}} & AOP   &  $\textbf{99.9914}$  &  $\textbf{99.9917}$ & $\textbf{99.9924}$ & $\textbf{99.9936}$ & $\textbf{99.9971}$\\
			\cline{2-7}
			& ANOP   &  $99.6325$  &  $99.7488$ & $99.8412$ & $99.9178$ & $99.9575$\\
			\cline{2-7}
			& DOULION    &  $99.6310$  &  $99.7544$ & $99.8392$ & $99.9121$ & $99.9584$\\
			\hline
			\multirow{3}{*}{\textbf{web-Google}} & AOP   &  $\textbf{99.9917}$  &  $\textbf{99.9923}$ & $\textbf{99.9929}$ & $\textbf{99.9939}$ & $\textbf{99.9975}$\\
			\cline{2-7}
			& ANOP   &  $99.6299$  &  $99.7391$ & $99.8435$ & $99.9168$ & $99.9577$\\
			\cline{2-7}
			& DOULION    &  $99.6305$  &  $99.7398$ & $99.8428$ & $99.9170$ & $99.9574$\\
			\hline
			\multirow{3}{*}{\textbf{Miami}} & AOP   &  $\textbf{99.9916}$  &  $\textbf{99.9919}$ & $\textbf{99.9926}$ & $\textbf{99.9938}$ & $\textbf{99.9974}$\\
			\cline{2-7}
			& ANOP   &  $99.6285$  &  $99.7495$ & $99.8384$ & $99.9168$ & $99.9562$\\
			\cline{2-7}
			& DOULION    &  $99.6288$  &  $99.7494$ & $99.8381$ & $99.9169$ & $99.9563$\\
			\hline
		
		\end{tabular}}
		
\end{table*}

Sparsification reduces memory requirement since only a subset of the edges are stored in the main memory. As a result, adaptation of sparsification allows our parallel algorithms to work with even larger networks. With sampling probability $q$ (the probability of retaining an edge), the expected number of edges to be stored in the main memory is $q|E|$. Thus, we can expect that the use of sparsification with our parallel algorithms will allow us to work with a network $1/q$ times larger. Sparsification technique also offers additional speedup due to working on a reduced graph. In \cite{tsourakakis2009doulion}, it was shown that due to sparsification with parameter $q$, the computation can be faster as much as $1/p^2$ times. However, in practice the speed up is typically smaller than $1/p^2$ but larger than $1/p$. As an example, with our parallel sparsification with AOP on LiveJournal network, we obtain speedups of $57.88$, $24.36$, $11.04$, $6.19$, and $4.0$ for $q=0.1$ to $0.5$, respectively. When an application requires only an approximate count of the total triangles in graph with a reasonable accuracy, such parallel sparsification algorithm will be proven useful.

\section{Listing Triangles in Graphs} \label{sec:trianglelisting}

Our parallel algorithms for counting triangles in Section \ref{sec:patric} and \ref{sec:space} can easily be extended to list all triangles in graphs. Triangle listing has various  applications in the analysis of graphs such as the computation of clustering coefficients, transitivity, triangular connectivity, and trusses \cite{chu11triangle}. Our parallel algorithms counts the exact number of triangles in the graph. To count the number of triangles incident on an edge $(u,v)$, the algorithms perform a set intersection operation $N_v\cap N_u$. After each intersection operation, all associated triangles can be listed simply by the code shown in Fig. \ref{algo:tlisting}.

\begin{figure}[!ht]
	\begin{center}
		\fbox{
			\begin{minipage}[c] {0.9\linewidth}
				\begin{algorithmic}[1]
					\STATE $S \leftarrow N_v \cap N_u$
					\FOR {$w \in S$}
					\STATE \textit{Output} triangle $(u,v,w)$
					\ENDFOR
				\end{algorithmic}
			\end{minipage}
		}
	\end{center}
	\caption{Listing triangles after performing the set intersection operation for counting triangles.}
	\label{algo:tlisting}
\end{figure}


\section{Computing Clustering Coefficient of Nodes} \label{sec:clustering}

Our parallel algorithms can be extended to compute local clustering coefficient without increasing the cost significantly. In a sequential setting, an algorithm for counting triangles can be directly used for computing clustering coefficients of the nodes by simply keeping the counts of triangles for each node individually. However, in a distributed-memory parallel system, combining the counts from all processors for a node poses another level of difficulty. We present an efficient aggregation scheme for combining the counts for a node from different processors.

\textbf{Parallel Computation of Clustering Coefficients.} Recall that clustering coefficients of nodes $v$ is computed as follows:
\begin{eqnarray*}
	C_v = \frac{T_v}{{d_v \choose 2}} = \frac{2T_v}{d_v(d_v - 1)},
\end{eqnarray*}
where $T_v$ is the number of triangles containing node $v$.

Our parallel algorithms for counting triangles count each triangle only once. However, all triangles containing a node $v$ might not be computed by a single processor. Consider a triangle $(u,v,w)$ with $u \prec_{\mathcal{D}} v \prec_{\mathcal{D}} w$. Further, assume that $u \in V_i^c$, $v \in V_j^c$, and $w \in V_k^c$, where $i \ne j \ne k$. Now, for our parallel algorithm \textit{AOP}, the triangle $(u,v,w)$ is counted by $P_i$. Let $T_v^i$ be the number of triangles incident on node $v$ computed by $P_i$. We also call such counts \textit{local counts} of $v$ in processor $P_i$. For the triangle $(u,v,w)$, $P_i$ tracks local counts of all of $u$, $v$, and $w$. Thus, the total count of triangles incident on a node $v$ might be distributed among multiple processors. Each processor $P_i$ needs to aggregate local counts of $u\in V_i^c$ from other processors. (For algorithm \textit{ANOP}, the above triangle $(u,v,w)$ is counted by $P_j$, and a similar argument as above holds.)

To aggregate local counts from other processors, the following approach can be adopted: for each processor, we can store local counts $T_v^i$ in an array of size $\Theta(n)$ and then use MPI \textit{All-Reduce } function for the aggregation. However, for a large network, the required system buffer to perform MPI aggregation on arrays of size $\Theta(n)$ might be prohibitive. Another approach for aggregation might be as follows. Instead of using main memory, local counts can be written to disk files based on some hash functions of nodes. Each processor $P_i$ then aggregates counts for nodes $v \in V_i^c$ from $P$ disk files. Even though this scheme saves the usage of main memory, performing a large number of disk I/O leads to a large runtime.  

Both of the above approach compromises either the runtime or space efficiency. We use the following approach which is both time and space efficient. 

Our approach involves two steps. First, for each triangle counted by $P_i$, it tracks local counts $T.^i$ as shown in Figure \ref{algo:localcount}.

\begin{figure}[!ht]
	\begin{center}
		\fbox{
			\begin{minipage}[c] {0.9\linewidth}
				\begin{algorithmic}[1]
					\FOR {for each triangle $(v,u,w)$ counted in $G_i$}
					\STATE $T_v^i \leftarrow T_v^i + 1$
					\STATE $T_u^i \leftarrow T_u^i + 1$
					\STATE $T_w^i \leftarrow T_w^i+1$
					\ENDFOR
				\end{algorithmic}
			\end{minipage}
		}
	\end{center}
	\caption{Tracking local counts by processor $P_i$. Each triangle $(v,u,w)$ is detected by the triangle listing algorithm shown in Fig. \ref{algo:tlisting}.}
	\label{algo:localcount}
\end{figure}

Second, processor $P_i$ aggregates local counts of nodes $v \in V_i^c$ from other processors. Total number of triangles $T_v$ incident on $v$ is given by $T_v = \sum_{j\ne i} T_v^j$. Each processor $P_j$ sends local counts $T_v^j$ of nodes $v \in V_i^c$ encountered in any triangles counted in partition $j$. $P_i$ receives those counts and aggregates to $T_v$. We present the pseudocode of this aggregation in Figure \ref{algo:cc}. Finally, $P_i$ computes $C_v = \frac{2T_v}{d_v(d_v -1)}$ for each $v \in V_{i}^c$.

\begin{figure}[!ht]
	\begin{center}
		\fbox{
			\begin{minipage}[c] {0.9\linewidth}
				\begin{algorithmic}[1]
					\FOR {$v \in V_i^c$}
					\STATE $T_v \leftarrow T_v^i$
					\ENDFOR
					
					\FOR {each processor $P_j$}
					\STATE Construct \textit{message} $\langle Y_i^j, \mathcal{T}_i^j \rangle$ s.t.:\\
					$Y_i^j \leftarrow \{v| v\in N_u, u\in V_i^c\}  \cap V_j^c$, $\mathcal{T}_i^j \leftarrow \{T_v^i | v \in Y_i^j\}$.
					\STATE  Send \textit{message} $\langle Y_i^j, \mathcal{T}_i^j \rangle$ to $P_j$
					\ENDFOR
					
					\FOR {each processor $P_j$}
					\STATE  Receive \textit{message} $\langle Y_j^i, \mathcal{T}_j^i \rangle$ from $P_j$
					\STATE $T_v \leftarrow T_v + T_v^j$ 
					\ENDFOR
				\end{algorithmic}
			\end{minipage}
		}
	\end{center}
	\caption{Aggregating local counts for $v\in V_i^c$ by $P_i$.}
	\afsc
	\label{algo:cc}
\end{figure}

Our approach tracks local counts for nodes $v \in V_i^c$ and neighbors of such $v$ which requires, in practice, significantly smaller than $\Theta(n)$ space. Next, we show the performance of our algorithm.

\textbf{Performance.}  We show the strong and weak scaling of our algorithm for computing clustering coefficients of nodes in Fig. ~\ref{fig:speedup_cc} and \ref{fig:combined_weak}, respectively. The algorithm shows good speedups and scales almost linearly to a large number of processors. Since aggregating local counts introduces additional inter-processor communication, the speedups are a little smaller than that of the triangle counting algorithms. For the same reason, the weak scalability of the algorithm is a little smaller than that of the triangle counting algorithms. However, the increase of runtime with additional processors is still not drastic, and the algorithm shows a good weak scaling. 

\begin{figure*}[!tbh]
	\hfill
	\begin{minipage}[t]{.47\textwidth}
		\begin{center}
			\centering
			\includegraphics[scale=0.5]{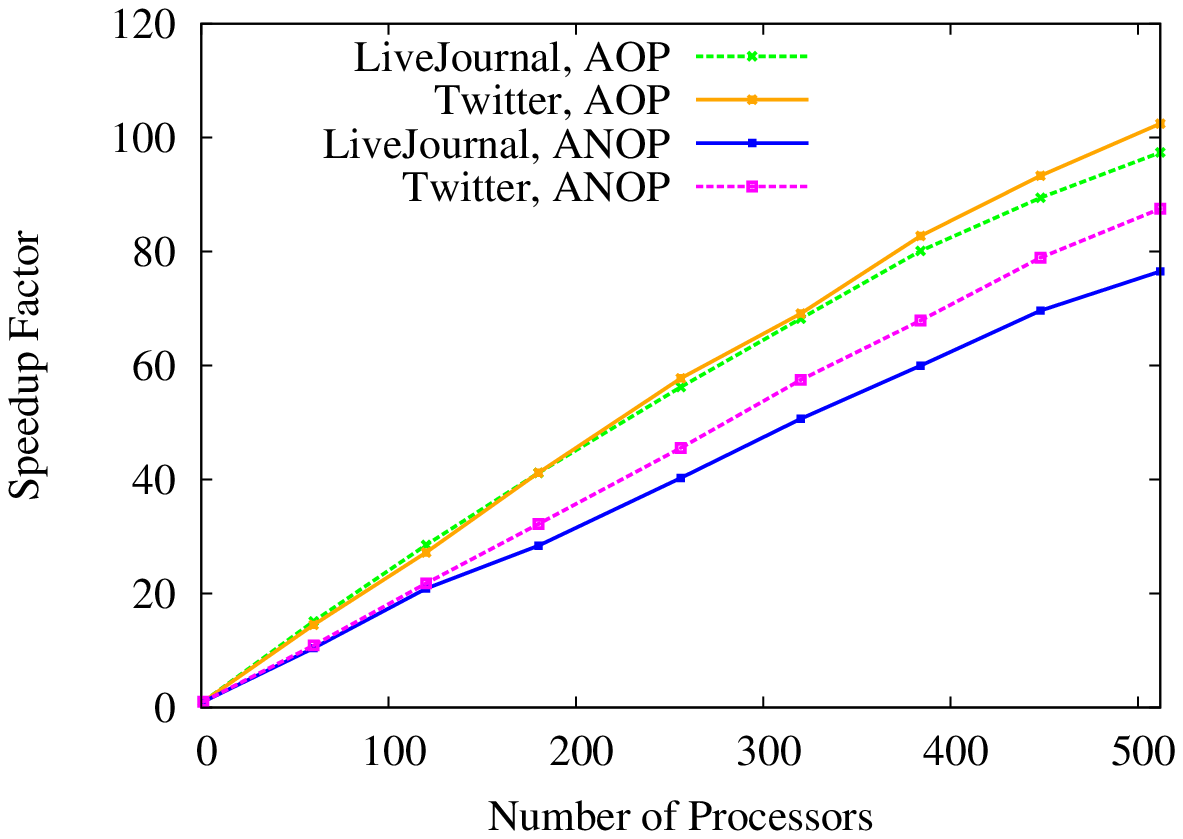}
			\caption{Strong scaling of clustering coefficient algorithm with both \textit{AOP} and \textit{ANOP} on LiveJournal and Twitter networks.}
			\label{fig:speedup_cc}
		\end{center}
	\end{minipage}
	\hfill
	\begin{minipage}[t]{.47\textwidth}
		\begin{center}
			\centering
			\includegraphics[scale=0.5]{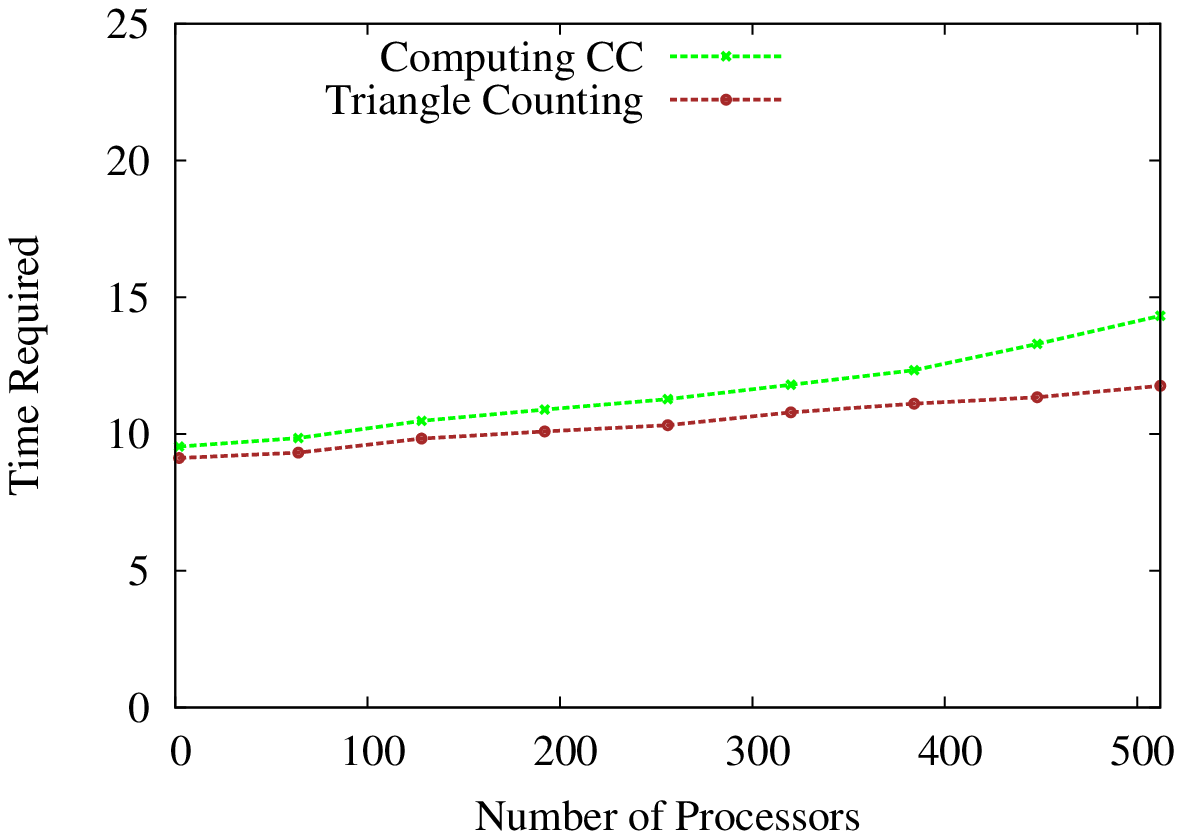}
			\caption{Weak scaling of the algorithms for computing clustering coefficient (CC) and counting triangles (TC).}
			\label{fig:combined_weak}
		\end{center}
	\end{minipage}
	\hfill
	\vspace{-5pt}
\end{figure*}

\section{Applications for Counting Triangles} \label{sec:applications}

The number of triangles in graphs have many important applications in data mining. Becchetti et al. \cite{BECC} showed how the number of triangles can be used to detect spamming activity in web graphs. They used a public web spam dataset and compared it with a non-spam dataset: first, they computed the number of triangles for each host and plotted the distribution of triangles and clustering coefficients for both dataset. Using Kolmogorov-Smirnov test, they concluded the distributions are significantly different for spam and non-spam datasets. Further, the authors also showed how to comment on the role of individual nodes in a social network based on the number of triangles they participate. Eckmann et al. \cite{eckmann2002theme} used triangle counting in uncovering the thematic structure of the web. The abundance of triangles also implies community structures in graphs. Nodes forming a subgraph of high triangular density usually belong to the same community. In fact, the number of triangles incident on nodes has been used by several methods in the literature of community detection \cite{PratPerez2016PTT,Zhang2009PCD,Soman2011FCD}. The computation of clustering coefficients also requires the number of triangles incident on nodes. Social networks usually demonstrate high average clustering coefficients. We show how clustering coefficients can be computed using our parallel algorithms in Section \ref{sec:clustering}. 

In this section, we discuss how the number of triangles can be used to characterize various types of networks. There is a multitude of real-world networks including social contact networks, online social networks, web graphs, and collaboration networks. These networks vary in terms of triangular density and community or social structure in them. As a result, it is possible to characterize real-world networks based on their triangle based statistics. We define the \textit{normalized triangle count} (\textbf{NTC}) as the mean number of triangles per node in the network. We compute NTC for a variety of networks and show the comparison in Table \ref{table:triangle-cc}. Many random graph models such as Erd\H{o}s-R\'{e}yni and Preferential Attachment models do not generate many triangles, and the resulting NTCs are also very low. Some communication and web graphs (e.g., Email-Enron) generate a descent number of triangles because of the nature of the communication and links among web pages in the host domain. When social or cluster structure exists in the network, we get a larger number of triangles per node, as shown in Table \ref{table:triangle-cc} for LiveJournal and web-BerkStan networks. Further, for networks with a more developed social structure and realistic person-to-person interactions, NTCs are very large, as evident for Miami, com-Orkut, and Twitter networks. Thus the number of triangles offers good insights about the underlying social and community structures in networks. 

\begin{table*}
	\tbl{ Comparison of the number of triangles ($\triangle$) and normalized triangle count (NTC) in various networks. We used both artificially generated and real-world networks.
	\label{table:triangle-cc}}{
	\begin{tabular}{ | l | c | c | c |}
		\hline
		{\bf Network} & {\bf $n$} & {\bf $\triangle$} & {\bf NTC$(\triangle/n)$} \\ \hline
		Gnp$(500K,20)$   &  $500$K    &   $1308$  & $0.0026$  \\ \hline
		PA$(25M,50)$ &	$25$M	& $1.3$M &	$0.052$ \\ \hline \hline
		Email-Enron   &  $37$K    &   $727044$  &  $19.815$  \\ \hline
		web-Google	&  $0.88$M	& $13.39$M	& $15.293$	 \\ \hline \hline
		LiveJournal	& $4.85$M	& $285.7$M	& $58.943$	 \\ \hline
		web-BerkStan  &  $0.69$M    &   $64.69$M & $94.408$  \\ \hline \hline
		Miami         &  $2.1$M    &   $332$M    &$158.095$ \\ \hline
		com-Orkut  &  $3.07$M    &   $628$M    &$204.262$  \\ \hline
		Twitter       &  $42$M   &     $34.8$B & $828.571$  \\ \hline
	\end{tabular}}
\end{table*}

\section{Conclusion}
\label{sec:conclusion}
We presented parallel algorithms for counting triangles and computing clustering coefficients in massive networks. These algorithms can work with networks that have billions of nodes and edges. Such capability of our algorithms will enable various types of analysis of massive real-world
networks, networks that otherwise do not fit in the main memory of a single computing node. These algorithms show very good scalability with both the number of processors and the problem size and performs well on both real-world and artificial networks. We have been able to count
triangles of a massive network with $10B$ edges in less than $16$ minutes. We presented several load balancing schemes and showed that such schemes provide very good balancing. Further, we have adopted the sparsification approach of DOULION in our parallel algorithms with improved accuracy. This adoption will allow us to deal with even larger networks. We also extend our triangle counting algorithm for listing triangles and computing clustering coefficients in massive graphs.


\bibliographystyle{ACM-Reference-Format-Journals}
\bibliography{CCpaper}  

\end{document}